\documentclass[mnsc,nonblindrev]{informs4}

\OneAndAHalfSpacedXI 

\usepackage{algorithm}
\usepackage{algpseudocode}

\usepackage{subcaption}
\usepackage{mathrsfs}
\usepackage{graphicx}
\usepackage{url}
\definecolor{DarkBlue}{rgb}{0,0.08,0.45}
\usepackage[backref = false, bookmarks, colorlinks = true, plainpages = false, citecolor = DarkBlue , urlcolor = DarkBlue, filecolor = DarkBlue, linkcolor = DarkBlue, hypertexnames=false]{hyperref}
\usepackage{multirow}
\usepackage{newunicodechar}
\usepackage[utf8]{inputenc}
\usepackage{natbib}
 \bibpunct[, ]{(}{)}{,}{a}{}{,}%
 \def\bibfont{\small}%
\TheoremsNumberedThrough     
\ECRepeatTheorems

\EquationsNumberedThrough    

\begin{document}
\newcommand{\abs}[1]{\left|  #1 \right| }
\newcommand{\brak}[1]{\left(#1\right)}    
\newcommand{\crl}[1]{\left\{#1\right\}}   
\newcommand{\edg}[1]{\left[#1\right]}     
\newcommand{\norm}[1]{\|#1\|}
\newcommand{\floor}[1]{\lfloor #1 \rfloor}

\newcommand{\cA}{{\mathcal A}}
\newcommand{\cB}{{\mathcal B}}
\newcommand{\cD}{{\mathcal D}}
\newcommand{\cF}{{\mathcal F}}
\newcommand{\cG}{{\mathcal G}}
\newcommand{\cH}{{\mathcal H}}
\newcommand{\cK}{{\mathcal K}}
\newcommand{\cL}{{\mathcal L}}
\newcommand{\cM}{{\mathcal M}}
\newcommand{\cR}{{\mathcal R}}
\newcommand{\cS}{{\mathcal S}}
\newcommand{\cT}{{\mathcal T}}
\newcommand{\cX}{{\mathcal X}}
\newcommand{\cP}{{\mathcal P}}

\newcommand{\mA}{{\mathbb A}}
\newcommand{\mV}{{\mathbb V}}
\newcommand{\mC}{{\mathbb C}}
\newcommand{\mR}{{\mathbb R}}
\newcommand{\mE}{{\mathbb E}}

\newcommand{\bb}{{\mathbf b}}
\newcommand{\bp}{{\mathbf p}}
\newcommand{\bc}{{\mathbf c}}
\newcommand{\bg}{{\mathbf g}}
\newcommand{\bl}{{\mathbf l}}
\newcommand{\br}{{\mathbf r}}
\newcommand{\bw}{{\mathbf w}}
\newcommand{\by}{{\mathbf y}}
\newcommand{\bx}{{\mathbf x}}

\newcommand{\bA}{{\mathbf A}}
\newcommand{\bB}{{\mathbf B}}
\newcommand{\bC}{{\mathbf C}}
\newcommand{\bD}{{\mathbf D}}
\newcommand{\bG}{{\mathbf G}}
\newcommand{\bL}{{\mathbf L}}
\newcommand{\bS}{{\mathbf S}}
\newcommand{\bQ}{{\mathbf Q}}
\newcommand{\bU}{{\mathbf U}}
\newcommand{\bV}{{\mathbf V}}
\newcommand{\bX}{{\mathbf X}}
\newcommand{\bZ}{{\mathbf Z}}
\newcommand{\bF}{{\mathbf F}}

\newcommand{\bmu}{{\boldsymbol \mu}}
\newcommand{\balpha}{{\boldsymbol \alpha}}
\newcommand{\bbeta}{{\boldsymbol \beta}}
\newcommand{\bepsilon}{{\boldsymbol{\varepsilon}}}
\newcommand{\btheta}{\boldsymbol{\theta}}
\newcommand{\bdelta}{\boldsymbol{\delta}}
\newcommand{\bsigma}{\boldsymbol{\sigma}}
\newcommand{\bnu}{\boldsymbol{\nu}}
\newcommand{\bSigma}{\boldsymbol{\Sigma}}
\newcommand{\bgamma}{\boldsymbol{\gamma}}
\newcommand{\bs}{\boldsymbol{s}}
\newcommand{\bz}{\boldsymbol{z}}

\newcommand{\E}{\mathbb{E}}
\newcommand{\F}{\mathbb{F}}
\newcommand{\p}{\mathbb{P}}
\newcommand{\Q}{\mathbb{Q}}

\newcommand{\R}{\mathbb{R}}
\newcommand{\q}{\mathbb{Q}}

\newcommand{\tr}{{\rm tr}}

\newcommand{\id}{{\mathbbm 1}}

\newcommand{\expect}{\mathbb{E}}

\newcommand{\III}{{\mathcal{I}}}
\newcommand{\HHH}{{\mathcal{H}}}
\newcommand{\LLL}{{\mathcal{L}}}
\newcommand{\CCC}{{\mathcal{C}}}
\newcommand{\XXX}{{\mathcal{X}}}
\newcommand{\YYY}{{\mathcal{Y}}}
\newcommand{\bh}{{\mathbf{h}}}
\newcommand{\bq}{{\mathbf{q}}}
\newcommand{\bk}{{\mathbf{k}}}
\newcommand{\bv}{{\mathbf{v}}}
\newcommand{\bu}{{\mathbf{u}}}

\newcommand{\softmax}{{\text{softmax}}}


\RUNAUTHOR{}

\RUNTITLE{LLM Biases}

\TITLE{LLM Biases}

\ARTICLEAUTHORS{%
\AUTHOR{Jinhui Han}
\AFF{Guanghua School of Management, Peking University, Beijing 100871, China, \EMAIL{jinhui.han@gsm.pku.edu.cn}}
\AUTHOR{Ming Hu}
\AFF{Rotman School of Management, University of Toronto, Toronto, Ontario, Canada M5S 3E6, \EMAIL{ming.hu@rotman.utoronto.ca}}
\AUTHOR{Xilin Zhang}
\AFF{Rotman School of Management, University of Toronto, Toronto, Ontario, Canada M5S 3E6, \EMAIL{xilin.zhang@utoronto.ca}}
} 

\ABSTRACT{Transformer-based agentic AI is rapidly being deployed on major platforms to help users shop, watch, and navigate content with less effort. While these systems can deliver impressive performance, a key concern is whether they may be less reliable than they appear. We ask a simple but fundamental question: whether the mechanisms that make transformer-based agents effective can also induce systematic biases or distortions? We study this question through a theoretical analysis of transformer-based generative recommenders, in which the next user interaction is generated sequentially from a user’s history. Focusing on how the model allocates attention across historical evidence, we identify four bias channels: (i) \emph{Positional bias:} stronger positional encoding shifts influence toward recent history, improving responsiveness but potentially reducing stability and long-term diversity; (ii) \emph{Popularity amplification:} small frequency differences in data can be magnified into disproportionate exposure, contributing to Matthew effects and echo chambers; (iii) \emph{Latent driver bias:} when important drivers of user choices are not directly observed, the model can place overly concentrated weight on a small subset of past events, creating overconfident attributions. (iv) \emph{Synthetic data bias:} when users increasingly follow AI suggestions and platforms retrain on model-shaped synthetic logs, outputs can concentrate over time, and long-tail alternatives can disappear first. Our analysis highlights mechanism-level reliability risks that may not be visible in offline performance metrics. The four bias channels indicate that large-scale deployment may systematically distort exposure and choice. For managers, the immediate implication is to treat these as operational risk factors and to monitor concentration and drift over time, rather than assuming that performance gains alone guarantee reliability.

}%



\maketitle

%


\section{Introduction}

With the rapid advances and broad deployment of large language models (LLMs), firms are increasingly adopting transformer-based models and LLM-style agents to tackle large-scale, complex decision problems. In Operations Management (OM) and Operations Research (OR), this progress has accelerated a shift from using AI primarily for prediction (e.g., demand forecasting or scoring) to using AI as a decision policy that maps rich histories into actions. A growing body of OM research and emerging business practice now explores transformer and agentic architectures for core decisions in assortment planning \citep{wang2023transformer, peng2024transformer}, inventory management \citep{duan2025ask,zhang2026llm, baek2026ai}, and supply chain planning and management \citep{simchi2025large, yin2025rethinking}. This development is particularly timely because customers and markets are increasingly delegating decisions to AI in real-world commerce journeys, reshaping how demand is formed and observed \citep{allouah2025your, baek2026evaluating}.

Against this broader backdrop, LLM-style and transformer-based AI agents are reshaping consumer platforms as shopping assistants, streaming copilots, and content navigators. Each time a platform decides what to show next or where to route a user, through a ranked list, a feed slot, an item page, or a bundle, it is effectively allocating a scarce exposure budget across items, sellers, and categories. In this sense, agentic recommendation is no longer just a back-end ranking tweak, but it has become a primary mechanism through which platforms convert attention into demand. Reflecting this shift, major platforms are investing heavily in transformer-based recommendation capabilities. YouTube, for example, states that recommendations operate primarily on the homepage and the ``Up Next'' panel \citep{Goodrow2021YouTubeRecSys} and develops industrial-scale generative recommendation methods \citep{he2025plum}. In music streaming, Spotify positions its AI DJ as a personalized guide that selects what to play next \citep{Spotify2025DJ}, and Spotify Research discusses semantic identifiers to support generative search and recommendation \citep{SpotifyResearch2025SemanticIDs}. Beyond streaming, Pinterest reports a transformer-based real-time user action model deployed on its largest engagement surface, Homefeed \citep{xia2023transact}. Amazon describes Rufus as a shopping assistant that helps customers discover products and supports intent-based shopping flows \citep{Amazon2025RufusMore}. OpenAI has rolled out shopping assistants in ChatGPT, including Pulse and ``Buy it in ChatGPT'', which are explicitly designed to support product research and recommend relevant options \citep{OpenAI2025ShoppingResearch}. Moreover, Meta describes GEM as an LLM-inspired ``foundation model'' designed for ads recommendation \citep{Meta2025GEM}.

What makes LLM-style and transformer-based AI agents powerful, relative to many traditional generative models, is the \emph{attention mechanism}, which enables the model to sift through a long history (past interactions, context, and instructions) and focus on the most relevant elements for the next prediction. Despite their impressive performance in reducing search and decision frictions, a central question for large-scale adoption remains: \emph{do transformer-based/LLM-style agents introduce systematic biases in recommendations?} We take a step toward answering this question by focusing on transformer-based generative recommenders (rather than fully dialogue-driven assistants), which preserves analytical tractability while maintaining practical relevance. In this framework, given a user’s interaction history, the model generates a short token code that maps to the next recommended item. The code can be learned from rich item information, thereby reflecting text, images, and structured attributes rather than a simple item ID. Importantly, generative recommenders can be deployed as the decision engine behind shopping assistants, streaming copilots, and content agents, which continuously produce personalized actions rather than static rankings.

Building on this setup, we establish four practically meaningful biases endogenously induced by the transformer mechanism: (i) \emph{Positional bias}. Transformers almost always include information about where each past interaction appears in the sequence, either its absolute position or, more often, relative position, indicating how far it is from the current step. We show that relative positional encodings can systematically put more weight on certain parts of a user’s history, so that recommendations tilt toward older versus newer behavior depending on the effective strength of the positional signal. (ii) \emph{Popularity bias}. Transformers can yield super-linear exposure amplification, with two distinct manifestations: (a) a platform-level \emph{Matthew effect} in which already popular items receive disproportionate exposure, increasing sales/play/traffic concentration; and (b) a user-level \emph{echo-chamber} effect that reinforces within-user frequency patterns. (iii) \emph{Latent driver bias}. Generative recommenders learn from clicks, views, and purchases. Yet many choices are decided by latent drivers not recorded in the data (e.g., situational context, user mood). We capture this mismatch as stochastic perturbations in representations and show that, as uncertainty increases, attention can become brittle and unreliable. (iv) \emph{Synthetic data bias}. As users increasingly delegate decisions to AI, interaction logs become endogenously shaped by the agent rather than generated by independent users. When platforms repeatedly retrain on these synthetic logs, recommendations can drift toward a narrower portion of the product catalog.

We emphasize that the four biases we have identified are not arbitrarily chosen. Each aligns with recognized concerns in recommender systems and with emerging issues in rapidly expanding agentic AI decision environments. Positional bias and popularity bias, for example, are widely discussed as model-intrinsic sources of bias in recommendation \citep{chen2023bias, dai2024bias}. On the other hand, misspecification of latent uncertainty and synthetic-data bias under delegated decision-making have become central concerns as agentic systems begin to shape the data on which they are trained and evaluated. Our paper provides complementary theoretical results to the growing body of empirical evidence that these biases can arise from the transformer’s attention mechanism, even when the model is trained ``correctly'' on historical logs. Our results offer practical insights for platform managers on when and how to deploy generative recommenders and agentic assistants, and what risks to anticipate in real-world implementations. 

Finally, note that while our analysis focuses on transformer-based generative recommenders, the insights extend naturally to many agentic AI deployments. A wide range of agents, whether they recommend items, draft actions, or execute multi-step tasks, operate via sequential generation. They repeatedly condition on an evolving context (e.g., interaction history, retrieved information, and prior outputs), and then produce the next token, action, or choice. In such systems, the same attention mechanism determines how past signals are weighted at each step. Consequently, the biases we characterize are not artifacts of a narrow recommender setting, but they are plausible risks in broader agentic settings whenever decisions are mediated by sequential generation. In this sense, our results offer a tractable foundation for reasoning about bias and governance in more general dialogue and action-driven agents.

The remainder of the paper is organized as follows. Section \ref{sec:lit} reviews related work on agentic AI, generative recommenders, and transformer bias. Section \ref{sec:model} introduces the generative recommender architecture that underpins our analysis. Section \ref{sec:bias} presents our main results on the biases induced by transformer mechanisms. Section \ref{sec:conclusion} concludes our findings and discusses future directions.

\section{Literature Review}\label{sec:lit}

Our literature review is organized around three related directions. First, we review the growing OM and OR literature that treats transformers and LLMs as decision-making agents for operational decision problems, including supply chain, inventory, and pricing. Second, we review transformer-based generative recommenders, with emphasis on semantic ID and tokenization-based models that are well-suited to analytical study. Third, we discuss emerging evidence of transformer-induced biases, drawing on recommender systems and broader transformer settings.

\subsection{Agentic AI in Operations Management}

A rapidly growing OM literature is repositioning transformers and LLMs from analytics tools to end-to-end decision-making agents, thereby reshaping not only prediction and automation but also the framing, execution, monitoring, and governance of operational decisions \citep{cohen2026om, dai2025assured}. This broader shift is visible across a range of OM problems. 

In assortment planning and
choice modeling, transformer-based architectures are used to capture richer behavioral structure
than classical discrete-choice baselines \citep{peng2024transformer, wang2023transformer}. For supply chain management, LLMs are developed to facilitate the understanding of data-driven planning tools \citep{simchi2025large, yin2025rethinking}. In pricing and revenue management, several papers \citep{cao2026llm, keppo2025ai} show that delegating pricing to LLM agents can yield supracompetitive outcomes and lead to autonomous collusion under repeated interaction. \citet{jiang2025rideagent} develop an LLM-enhanced feature-driven optimization framework for taxi fleet pre-allocation and pricing. A parallel stream focuses on turning managerial intent into formal optimization models and solver-ready code. ORLM \citep{huang2025orlm} and OptiMUS \citep{ahmaditeshnizi2024optimus} aim to map natural-language problem descriptions into mathematical formulations, executable code, and iterative debugging loops.

Within inventory problems, several papers use the newsvendor setting as a clean benchmark to understand how LLM-based agents reason under uncertainty and how their decisions depart from optimal policies. \citet{wang2024understanding} explore pretrained transformers as decision rules for dynamic pricing and newsvendor problems. Some papers treat LLMs less as stand-alone solvers and more as collaborators that elicit inputs, clarify ambiguous goals, invoke rigorous algorithms \citep{duan2025ask}, and complement human and OR methods \citep{baek2026ai}. \citet{liu2025large} report persistent deviations and bias amplification patterns across leading LLM models even when the optimal formula is provided. \citet{wang2025human} study richer frictions in human-AI interactions, such as partial adherence, behavioral bias, and learning in human-AI-human newsvendor systems. 

Another closely related stream uses LLMs not to act directly, but to generate synthetic preferences or survey responses that then feed downstream decisions. This includes work on LLM-persona-generated distributions for assortment, pricing, and newsvendor decisions \citep{baek2026evaluating}, as well as market-research papers that study data augmentation and the optimal amount of LLM-generated synthetic data to blend with real observations \citep{wang2024large, yin2025rethinking}. \citet{hao2025voice} show that the way how AI agents are presented and interact with users can materially shape user responses. \citet{bhat2024evaluating} show that personalized value alignment in human-robot interaction can affect trust and team performance outcomes. This line of work suggests that AI systems increasingly shape not only upstream operational data on which future actions may be trained, but also the behavioral responses through which such data are generated.

Taken together, these papers imply a new trend in OM: from optimization and prediction as back-end tools to agentic systems that interpret context, interact with users, generate synthetic inputs, propose actions, and, in some cases, act autonomously in core operational decisions. Meanwhile, the same move that creates speed and scale also elevates the importance of understanding how transformer-based mechanisms translate history, context, and feedback into actions, because these systems increasingly sit on the critical path of operational performance and market outcomes.

\subsection{Transformer-Based Generative Recommenders}

Transformer models are reshaping recommendation systems, in large part because they scale to massive catalogs and can incorporate heterogeneous item information beyond sparse interaction logs. For comprehensive overviews of this rapidly growing literature, we refer readers to the surveys by \cite{wu2024survey} and \cite{lin2025can}. Surveys typically distinguish between (i) language-based reformulations that cast recommendation as text generation or text understanding, and (ii) ID-based tokenization/indexing that adapts language models to discrete item universes. The language-based line is extensive and has produced a series of influential frameworks that treat recommendation tasks in a unified natural-language processing framework. Representative examples include P5 \citep{geng2022recommendation}, LLM-Rec \citep{lyu2024llm}, and LLaRa \citep{liao2024llara}. 

Our main focus is on the ID-based paradigm, particularly because it yields analytically tractable formulations while retaining the sequential, attention-based structure. Early work in this direction applies transformer attention to sequential recommendation using standard item IDs. \citet{kang2018self} propose SASRec, showing that self-attention can effectively summarize user histories and outperform RNN-style sequence models. \citet{sun2019bert4rec} further advance this line by moving from left-to-right prediction to bidirectional sequence modeling. Subsequent models refine this backbone along different dimensions, such as explicit time-interval encoding in TiSASRec \citep{li2020time} and contrastive regularization to mitigate representation degeneration in DuoRec \citep{qiu2022contrastive}.

More recent work advances toward semantic ID-based generative recommendation, in which each catalog item is mapped to a short sequence of discrete tokens (rather than a single token, as in standard item IDs). A key advantage of this representation is that semantic IDs can be learned from item embeddings that already fuse heterogeneous information, including textual descriptions, structured attributes, images, and collaborative signals. Exemplified by TIGER \citep{rajput2023recommender}, which generates the next item by predicting its semantic ID token by token from the user context, this design replaces or complements large item embedding tables with a compact discrete vocabulary and enables recommendation via sequence generation. Related work further argues that semantic IDs can improve generalization by leveraging statistical dependencies across similar items, making token design and indexing structures central to performance \citep{singh2024better}. A fast-growing follow-up literature therefore focuses on the tokenization step itself, proposing more informative and robust semantic ID constructions through improved quantization schemes, contrastive objectives, and joint training of the tokenizer and the generative model \citep{zheng2024adapting, zhu2024collaborative}. In addition, hybrid architectures have emerged that integrate behavioral interaction signals with semantic representations within a unified generative pipeline, aiming to preserve complementary information sources rather than collapsing them prematurely \citep{wang2024eager, li2023diffurec}.

In sum, the semantic ID paradigm provides a particularly appealing foundation for analytically tractable generative recommendation. It retains the sequential generation structure of transformers while introducing a compact discrete representation that can be learned from rich item embeddings, including embeddings derived from textual descriptions and other metadata. Our theoretical analysis is therefore centered on this powerful and tractable framework.

\subsection{Transformer Biases}

A growing body of literature cautions that the deployment of transformers can introduce biases and risks despite their overall good performance \citep{allouah2025your, castro2025does}. A recent survey of bias in recommender systems in the LLM era highlights positional bias, popularity bias in ID-based recommenders, and hallucination issues in attention-based model development \citep{dai2024bias}. Building on this, we further study two concerns that are increasingly salient in agentic LLM deployments, latent driver bias and synthetic bias. Although our setting is on recommendation, we draw on broader evidence from transformers, as the same attention mechanism is central to these phenomena.

Positional bias is a widely documented phenomenon in which transformer outputs systematically depend on the position of relevant evidence within the input context. Empirically, LLMs often under-utilize information placed in the middle of long contexts \citep{liu2024lost}, and several studies identify ``attention sinks'', where attention mass concentrates disproportionately on early tokens \citep{xiao2023efficient}. Recent work further develops mechanistic accounts and mitigation approaches for such position dependence \citep{wang2025eliminating, wuemergence}. Popularity bias is another central concern. Transformer training and inference can systematically favor frequent patterns, increasing concentration \citep{wu2025generative}. In recommender settings, these forces map naturally to popularity amplification and long-tail suppression \citep{he2023large}.

As agentic AI becomes more prevalent, a separate question is whether these systems are reliable when observed data are noisy with latency or incompleteness \citep{liang2025artificial}. In transformer settings, recent work diagnoses attention-entropy collapse as a training instability that can lead models to over-concentrate on a narrow subset of signals under uncertainty \citep{zhai2023stabilizing}. Finally, a rapidly expanding literature emphasizes that the most consequential distortions may arise in closed-loop environments, where decisions are delegated to agents and models are retrained on agent-shaped data. Such feedback can shift market outcomes, including collusion-like patterns in pricing settings \citep{keppo2025ai,cao2026llm}. Industry evidence from large-scale recommender deployments further underscores the practical challenges of managing feedback loops in production \citep{tong2023navigating}, and related work has shown degradation when models are repeatedly trained on synthetic or model-generated data \citep{seddik2024bad}.

Taken together, these streams of literature establish that the bias concerns we discuss are recurring phenomena across transformers, recommender systems, and LLM-enabled information retrieval pipelines. Our work contributes to this line by providing a mechanism-based, analytically tractable account of how these biases can endogenously arise in transformer-based generative recommendation, and by clarifying the operational regimes in which they are most likely to become pronounced.

\section{Generative Recommendation Model}\label{sec:model}

In this section, we introduce the generative recommender as an autoregressive transformer (i.e., predicting the next item step by step, using only what has been seen/generated so far) and describe its attention structure, highlighting how attention assigns weights to past interactions when producing each next-item prediction.

\subsection{Model}
We consider a platform with an item catalog $\III$, where each item is indexed by $i \in \III$. A user's interaction history up to time $t$ is an ordered sequence $H_t = (i_1, \ldots, i_t)$, where $i_\tau \in \III$ denotes the item clicked, viewed, or purchased at step $\tau$. Given $H_t$, the recommender system selects (or generates) a next item to display to the user.

\textbf{Tokenization and IDs.} For a generative recommender to output items step by step, we represent each item with a code, which is a short token ID sequence. A token is a discrete ID from a fixed list/vocabulary, analogous to a barcode but with richer information. Formally, each item $i$ is assigned a token sequence $C(i) = (c_1, \ldots, c_L) \in \CCC^L$, where $\CCC$ is a finite token vocabulary and $L \geq 1$ is the code length (the number of tokens used to represent an item). We write $\phi:\mathcal I \to \CCC^L$ for this item-to-code mapping so that $C(i) = \phi(i)$. The recommender generates the tokens sequentially, and the generated code is finally mapped back to a catalog item.

The above notation unifies two common ID schemes: (i) \emph{item-ID} recommenders correspond to $L=1$, where each item is mapped to a single unique token (one ID per item); (ii) \emph{semantic-ID} recommenders correspond to $L>1$, where each item is represented by multiple tokens, forming a multi-token code. The practical motivation for semantic IDs is that the code can be derived from a rich item representation that fuses heterogeneous signals, including text (titles, attributes, reviews), images, structured metadata, and collaborative signals from interaction logs, before being discretized into token sequences. The codes can be constructed to reflect item similarity, with items sharing tokens when their content or behavior is similar.

Throughout, we assume $\phi$ is a bijection between items and token sequences (each item has a unique code and each code identifies a unique item). This assumption is without loss of generality for our analysis. On one hand, if a tokenizer produces collisions (two items receiving the same code), one can append a short disambiguation suffix so that codes become unique. 
On the other hand, to ensure that the model outputs only valid item codes, platforms can enforce catalog-constrained decoding by restricting generation to prefixes that match at least one item code in $\{\phi(i):i \in \III\}$. Under these standard engineering fixes, every generated token sequence corresponds to exactly one item in the catalog.

\textbf{Model output and training objective.} We model recommendation as a step-by-step generation process. Given a user’s past interactions (history) $H_t$, the recommender with parameters $\theta$ defines a probability distribution over token sequences $C=(c_1,\ldots,c_L)$:
\begin{equation}
    p_{\theta}(C|H_t) = \prod^L_{l=1} p_{\theta}(c_{l}|H_t, c_{<l}), \label{eq:p_theta}
\end{equation}
where $c_{<l}:=(c_1, \ldots, c_{l-1})$ are the previously generated token IDs. That is, the model predicts the first token, then the second token given the first, and so on, until it produces a full-length $L$-token code. Hence, the probability of recommending item $i$ is simply the probability of generating its code:
\begin{equation}
    p_{\theta}(i \mid H_t) = p_{\theta}(\phi(i)\mid H_t) = \prod_{l=1}^{L} p_{\theta}\left(c_l(i)\mid H_t, c_{<l}(i)\right). \nonumber
\end{equation}

We next discuss how to derive the recommender $p_{\theta}(i \mid H_t)$ through training. Let $\mathcal{D}$ be the logged training dataset, consisting of pairs $(H_t, i_{t+1})$, where $H_t$ is the user’s past interactions up to time $t$, and $i_{t+1}$ is the next item the user interacted with in the log. The model is trained with cross-entropy loss.
\begin{itemize}
    \item \textit{Item-ID generator}: In the item-ID case, each item is represented by a single token, i.e., $L=1$ and $\phi(i)=c(i)$. Then the model output is $p_{\theta}(i \mid H_t)$ over the catalog, and the next-item log-likelihood objective is
    \begin{equation}
        \mathcal{L}_{\text{IID}}(\theta) = -\sum_{(H_t,i_{t+1})\in\mathcal{D}} \log p_{\theta}(i_{t+1}\mid H_t). \nonumber
    \end{equation}
    This family includes autoregressive/causal (the model can attend to history positions, but not to future positions) sequential recommenders \citep{kang2018self, li2020time, qiu2022contrastive} and masked/cloze (the model can attend to all positions in the sequence but the masked ones) training variants \citep{sun2019bert4rec}.

    \item \textit{Semantic-ID generator}: In the semantic-ID case, each item $i$ is represented by a length-$L$ code sequence $\phi(i) = (c_1(i), \ldots, c_L(i))$. The training objective decomposes across tokens:
    \begin{equation}
        \mathcal{L}_{\text{SID}}(\theta) = -\sum_{(H_t,i_{t+1})\in\mathcal{D}} \sum_{l=1}^{L} \log p_{\theta}\left(c_l(i_{t+1}) \mid H_t, c_{<l}(i_{t+1})\right). \nonumber
    \end{equation}   
    This formulation turns recommendation over a large catalog into prediction over a much smaller token vocabulary at each step, and it naturally supports parameter sharing because similar items may share parts of their codes \citep{rajput2023recommender, wang2024eager}.
\end{itemize}

\subsection{Transformer Architecture}

\subsubsection{High-level structure.}
We study a standard \emph{encoder-decoder} transformer architecture used in many modern generative recommenders. The encoder reads a user’s past interactions (clicks, views, purchases) and turns them into a collection of vectors, one per past event, that summarize the history in context. The decoder then generates the code for the next item, one token at a time. At each generation step, the decoder selects which past events to attend to more before producing the next token.

This decision is made by the transformer’s attention mechanism. A helpful way to understand attention is to view it as a data-driven allocation rule over past information. At a given step, the decoder forms: (i) a query vector $\bq_l$, which represents the ``information need'' at the current position $l$ (what the user seems to want now); (ii) for each past interaction, a key vector $\bk_j$ for history position $j$, which acts like a searchable label describing past interaction $j$ (what happened before); and (iii) a value vector $\bv_j$ for history position $j$, which contains the information we would like to extract from past interaction $j$ if it turns out to be relevant (what past information the recommender may reuse). The model compares the query to each key (via dot products), converts these relevance scores into nonnegative weights that sum to one (via a softmax):
\begin{align}
    Z_{l,j} := \frac{\langle \bq_l, \bk_j\rangle}{\sqrt{d}}, \quad A_{l,j}:= \frac{\exp\left(Z_{l,j}\right)} {\sum_{j'=1}^T \exp\left(Z_{l,j'}\right)}, \nonumber
\end{align}
where $T$ is the number of past tokens. Finally, the transformer takes a weighted average of the values:
\begin{align}
    \text{Attn}_l(\bq_l,\{\bk_j, \bv_j\}^T_{j=1}) := \sum^T_{j=1} A_{l,j} \bv_j. \nonumber
\end{align}
In other words, attention spends a unit weight budget across past interactions and returns the corresponding weighted summary. 

To write this compactly, we stack many queries into a matrix $Q$ (one query per row) and stack the keys and values of all candidate past events into matrices $K$ and $V$ (one key-value pair per row). Then, in one line, scaled dot-product attention is
\begin{equation}
    \text{Attn}(Q,K,V) = \text{softmax}\left(\frac{QK^\top}{\sqrt{d}}\right)V, \label{eq:attn}
\end{equation}
where the softmax operator is applied row-wise so that each row of weights sums to one. From an OM perspective, attention is a soft allocation of a unit weight budget across competing candidates (past interactions). The output is therefore an attention-weighted summary of historical information tailored to the current decision step. This view is central to our bias analysis, as systematic distortions can arise from the formation of these weights via dot products and softmax normalization.

\textbf{Notation.} Throughout the paper, we use $[N]$ to denote the set of $\{1, \ldots, N\}$. Recall that a user’s interaction history is an ordered list of items $H_t=(i_1,\ldots,i_t)$, where each item $i_\tau$ is represented by a length-$L$ discrete code
$\phi(i_\tau)=(c_{\tau,1}, \ldots, c_{\tau,L})$. Concatenating these codes turns the history into a single sequence of $T:=t \times L$ token IDs $(c_{1,1}, \ldots, c_{1,L}, \ldots, c_{t,1}, \ldots, c_{t,L})$. The transformer first maps each token ID to a vector via an embedding lookup, assigning a $d^{\text{enc}}$-dimensional vector to each token in the vocabulary. We write $H^{(0)}=(\bh^{(0)}_1,\ldots,\bh^{(0)}_T)$ for these initial history-token embeddings. The next item is generated as a length-$L$ code $C=(c_1,\ldots,c_L)$, and we denote its initial decoder embeddings by $Y^{(0)} =(\by^{(0)}_1, \ldots, \by^{(0)}_L)$.

Both the encoder and the decoder use the same type of computational block to refine the embeddings iteratively. We index these repeated blocks by the layer number $m\in[M]$, where layer $m$ corresponds to the representation obtained after $m$ rounds of internal processing. On the encoder side, let
\begin{align}
    H^{(m)}=\left(\bh^{(m)}_1, \ldots, \bh^{(m)}_T\right), \quad \bh^{(m)}_j \in \mathbb{R}^{d^{\text{enc}}},
\end{align}
denote the encoder outputs at layer $m$. Here $\bh^{(m)}_j$ is the encoder’s vector summary of the $j$-th history token after incorporating information from other parts of the history and is therefore context-aware.
On the decoder side, at generation step $l$, the decoder maintains a vector state that summarizes both what has been generated so far and what has been retrieved from the history. We write
\begin{align}
    Y^{(m)}=\left(\by^{(m)}_1,\ldots,\by^{(m)}_L\right), \quad \by^{(m)}_l \in \mathbb{R}^{d^{\text{dec}}}, \nonumber
\end{align}
for the decoder’s vector states at layer $m$. These states are then used to produce probabilities over the next discrete token ID $c_l$. 

Finally, note that the encoder and decoder need not have the same number of layers; we use the common index $m$ solely for notational convenience. For readability, we present the single-head attention formulation in the main text. Multi-head attention, residual connections, and layer normalization are standard architectural elements and are omitted because they do not modify the dot-product-softmax structure underlying our analysis. We provide an overall framework in Figure \ref{fig:flow}.

\begin{figure}[ht]
\centering
\includegraphics[width=0.95\textwidth]{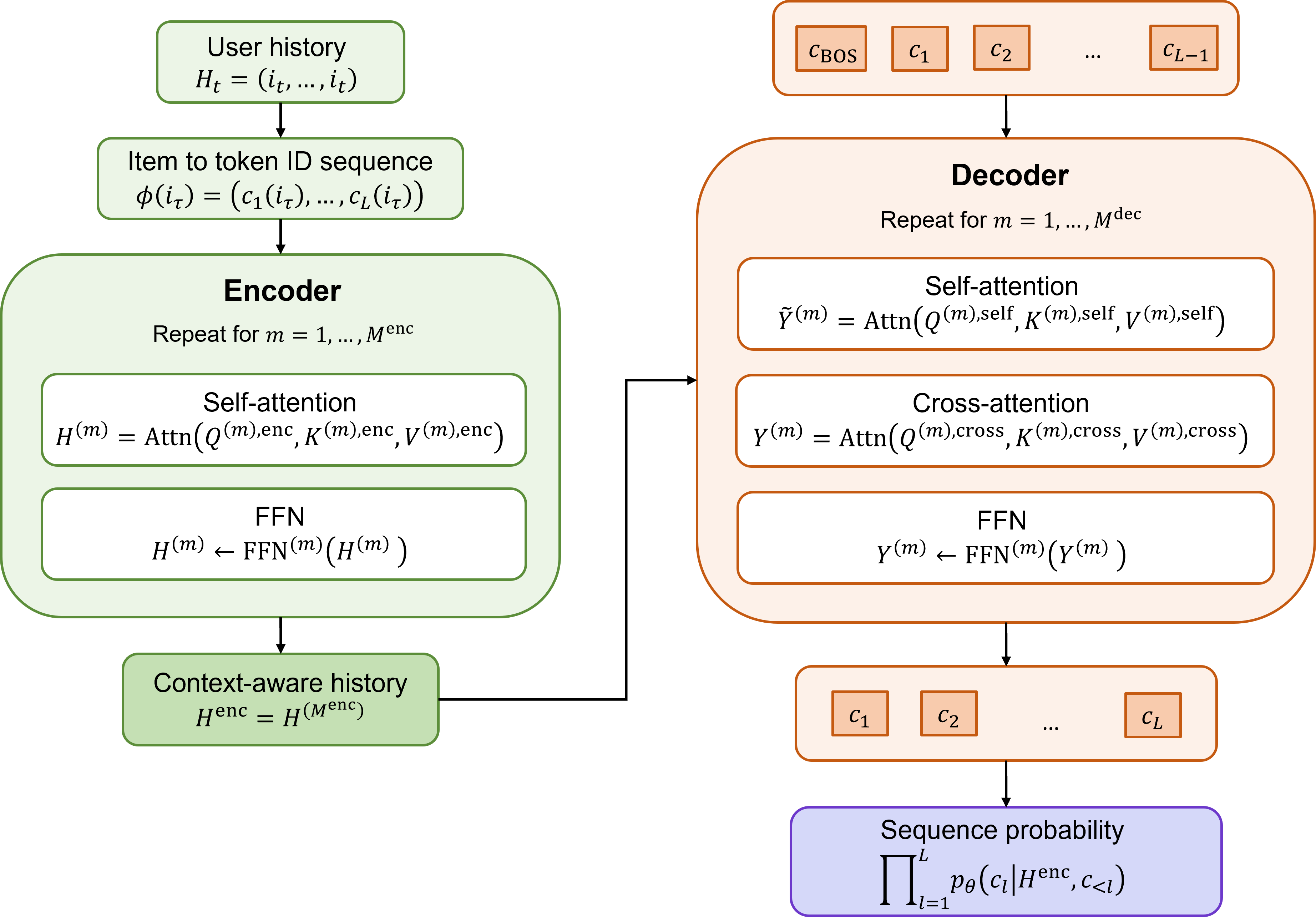}
\caption{Encoder-decoder transformer for generative recommendation}\label{fig:flow}
\end{figure}

\subsubsection{Encoder: self-attention + feed-forward network.}

The encoder converts the raw history embeddings into context-aware history vectors by repeating two steps: (i) let history tokens exchange information (self-attention), and (ii) locally refine each token’s vector (feed-forward network).

\begin{itemize}
    \item[(i)] \textit{Self-attention over the history $H^{(m)}$.} At encoder layer $m$, the input is the previous-layer history matrix $H^{(m-1)}=(\bh^{(m-1)}_1,\ldots,\bh^{(m-1)}_T)\in \mathbb{R}^{T \times d^{\text{enc}}}$. Self-attention creates three derived matrices:
    \begin{equation}
        Q^{(m),\text{enc}} = H^{(m-1)}W^{(m),\text{enc}}_{Q}, \quad K^{(m),\text{enc}} = H^{(m-1)}W^{(m),\text{enc}}_{K}, \quad V^{(m),\text{enc}} = H^{(m-1)}W^{(m),\text{enc}}_{V}, \nonumber
    \end{equation}
    where $W^{(m),\text{enc}}_{Q}, W^{(m),\text{enc}}_{K}, W^{(m),\text{enc}}_{V}$ are trainable matrices that project each history token embedding vector into three roles: \emph{queries} ($Q$) represent what each token is looking for, \emph{keys} ($K$) represent how each token should be matched, and \emph{values} ($V$) represent the information that will be averaged and passed forward. The attention rule then assigns, for each history position $j$, a set of nonnegative weights over all history positions $j'=1, \ldots, T$ (weights sum to one), and replaces each token vector by a weighted average of the value vectors:
    \begin{equation}
        H^{(m)} = \text{Attn}\left(Q^{(m),\text{enc}},K^{(m),\text{enc}},V^{(m),\text{enc}}\right), \nonumber
    \end{equation}
    where $\text{Attn}(\cdot)$ is defined in \eqref{eq:attn}. Here, self-attention performs a soft allocation of a unit-weight budget across past tokens, yielding a context-dependent weighted average. After this step, each element $\bh^{(m)}_j$ summarizes not only the token $j$ itself but also the other parts of the history that the model deems relevant.

    \item[(ii)] \textit{Feed-forward network (FFN).} Attention mixes information \emph{across} positions. The FFN then refines information \emph{within} each position. Concretely, an FFN is a nonlinear function applied to each row independently:
    \begin{equation}
        H^{(m)} \leftarrow \mathrm{FFN}^{(m)}(H^{(m)}). \nonumber
    \end{equation}
    The FFN enriches each token’s representation after pooling information from the rest of the history.
\end{itemize}

After $M^{\text{enc}}$ layers, we denote the final encoder output by
$H^{\text{enc}}:=H^{(M^{\text{enc}})}$. We remark that in some semantic-ID systems, the encoder can be replaced or complemented by other representation modules (e.g., VAE-style tokenizers; \citealt{rajput2023recommender}). We focus on the dot-product softmax attention blocks that appear in transformer components.

\subsubsection{Decoder: self-attention + cross-attention + FFN.}\label{sec:decoder}

The decoder generates the next item’s code token-by-token. At each step, it (i) keeps the partially generated code internally consistent (masked self-attention), 
(ii) retrieves relevant signals from the user’s history (cross-attention), and (iii) applies an FFN to refine the resulting representation.

\begin{itemize}
    \item[(i)] \textit{Self-attention within decoder embeddings $Y$.} At decoder layer $m$, the decoder takes the previous-layer decoder states $Y^{(m-1)} \in \mathbb{R} ^{L\times d^{\text{dec}}}$ and forms
    \begin{equation}
        Q^{(m),\text{self}} = Y^{(m-1)}W^{(m),\text{self}}_{Q}, \quad K^{(m),\text{self}} = Y^{(m-1)}W^{(m),\text{self}}_{K}, \quad V^{(m),\text{self}} = Y^{(m-1)}W^{(m),\text{self}}_{V}. \nonumber
    \end{equation}
    A causal mask (the model can attend to past but not future positions) is applied to enforce the left-to-right rule, i.e., when forming the state at position $l$, the decoder can only use positions $1, \ldots, l$ and cannot look ahead to $l+1, \ldots, L$. Formally,
    \begin{equation}
        \widetilde{Y}^{(m)} = \text{softmax}\left(\text{mask}\left(\frac{Q^{(m),\text{self}}(K^{(m),\text{self}})^\top}{\sqrt{d^{\text{dec}}}}\right)\right)V^{(m),\text{self}}, \nonumber
    \end{equation}
    where $\text{mask}(\cdot)$ sets entries with $k>l$ to $-\infty$ before the softmax (so the corresponding weights become zero). This step allows each code position to borrow information from earlier generated positions, ensuring code coherence.    

    \item[(ii)] \textit{Cross-attention from decoder embeddings $Y$ to history embeddings $H^{\text{enc}}$.} Cross-attention is the mechanism through which the decoder decides which parts of the user’s history matter for generating the next token. Using the masked self-attention output $\widetilde{Y}^{(m)}$ as the decoder’s current state, define
    \begin{equation}
        Q^{(m),\text{cross}} = \widetilde{Y}^{(m)}W^{(m),\text{cross}}_{Q}, \quad K^{(m),\text{cross}} = H^{\text{enc}}W^{(m),\text{cross}}_{K}, \quad V^{(m),\text{cross}} = H^{\text{enc}}W^{(m),\text{cross}}_{V}. \nonumber
    \end{equation}
    Here, the query comes from the decoder (what to decide now), while keys/values come from the encoded history (what happened before). The resulting update is
    \begin{equation}
        Y^{(m)} = \text{Attn}\left(Q^{(m), \text{cross}}, K^{(m),\text{cross}},  V^{(m),\text{cross}}\right). \nonumber
    \end{equation}
    For each decoder position, cross-attention is the key step, allocating weight across all past interactions and producing an attention-weighted summary of the history tailored to the current prediction.
    
    \item[(iii)] \textit{Feed-forward network.} Finally, the decoder applies a position-wise nonlinear refinement:
    \begin{equation}
        Y^{(m)} \leftarrow \mathrm{FFN}^{(m)}(Y^{(m)}). \nonumber
    \end{equation}
\end{itemize}

\textbf{Attention head output.} After the last decoder layer, the model maps the decoder state at step $l$ into a score/logit for each possible next token in the vocabulary $\CCC$:
\begin{align}
    \bu_l := W^{\text{out}} \by^{(M^{\text{dec}})}_l + \bb \in \mathbb R^{|\CCC|}.
\end{align}
These scores are then turned into probabilities by a softmax:
\begin{align}
    p_\theta(c_l = k \mid H_t, c_{<l}) \;=\; \softmax(\bu_l)_k, \quad k\in \CCC.
\end{align}
Intuitively, attention decides how much weight to place on different parts of the past history when forming the output at step $l$, $\by_l$. The output layer then converts $\by_l$ into a probability distribution over the next token ID. Multiplying these step-by-step probabilities across $l=1,\ldots,L$ gives $p_\theta(C\mid H_t)$ in \eqref{eq:p_theta}, and hence the probability of each item via its code.

\textbf{Relation to item-ID generators.} The encoder-decoder structure described above is most natural for semantic-ID generation with code length $L>1$. Two simplifications recover commonly used item-ID recommenders. First, when $L=1$ (each item is a single token), the decoder reduces to a one-step prediction, and the masked self-attention within $Y$ becomes trivial. Second, many sequential recommenders adopt an encoder-only architecture (e.g., SASRec), in which the model directly applies (masked) self-attention to the history and outputs next-item scores without an explicit cross-attention block. Our theoretical results focus on the shared dot-product-softmax attention primitive, which governs how past signals are weighted in both semantic-ID and item-ID implementations.

\section{Transformer-Based Recommender Biases}\label{sec:bias}

We analyze four bias channels that are economically meaningful for various online platforms. Section \ref{sec:position} studies \emph{positional bias}, in which positional encoding and attention interact to overweight particular portions of the user history, generating systematic recency or primacy biases. Section \ref{sec:popularity} identifies \emph{popularity bias}, showing how the dot-product-softmax operator inherent in the attention mechanism can generate super-linear exposure concentration and rich-get-richer dynamics at both platform and user levels. Section \ref{sec:cloning} studies \emph{latent driver bias}, where uncertain behavioral drivers can cause attention to become brittle and dominated by a small subset of observed interactions. Finally, Section \ref{sec:delegation} establishes \emph{synthetic data bias}, where repeated delegation and retraining on agent-shaped logs create closed-loop dynamics that can narrow the effective catalog.

In an encoder-decoder generative recommender, the user’s interaction history influences the output probability distribution over tokens $p_{\theta}(c_l \mid H_t, c_{<l})$ primarily through \textbf{cross-attention} from the decoder to the encoder memory $H^{\text{enc}}$. Hence, our analysis mostly focuses on decoder cross-attention for clarity and tractability, but the logic extends to a broad class of transformer recommenders. We emphasize that any bias arising from the dot-product scoring and softmax normalization in this interface can also occur in encoder self-attention (and in encoder-only recommenders such as SASRec). Thus, we restrict our analysis on the cross-attention block, which can be viewed as a convenient minimal setting that cleanly isolates how the dot-product-softmax converts relative scores into an allocation of weight across historical evidence. Other standard transformer components in a full model, including multi-head aggregation, layer normalization, and feed-forward blocks, may attenuate or amplify the effects we identify. Nevertheless, our results establish that these biases can arise already in a minimal attention block. Understanding these source mechanisms is useful because governance interventions must ultimately act on them.

\subsection{Positional Bias: Anchored vs. Myopic Personalization}\label{sec:position}

Positional bias means that a user's past interactions not only influence the next recommendation through their content (token embeddings), but also through their position in the history. In practice, some agents behave as if early parts of the interaction history carry persistent influence, while others quickly chase recent clicks/views/purchases and flush out older interactions. This depends on how the system allocates scarce exposure across a user’s long-run interests versus short-run signals, trading off responsiveness (adapting quickly) and stability (not overreacting to recency). In this section, by isolating how positional information shapes attention weights in a generative recommender, we provide a concrete, mechanism-level explanation for the positional bias in generative recommenders. 

Our results also suggest why the same ``where-it-appears'' effects can arise more broadly in attention-based AI agents. For example, it is well documented that long-context language models can underweight evidence that appears in the middle of a prompt (``lost in the middle'' by \citealp{liu2024lost}), and can assign disproportionate weight to certain tokens that act as persistent ``attention sinks'' \citep{xiao2023efficient}. Empirical studies also report display-position effects, in which AI agents can systematically favor options shown in specific screen locations even when the underlying options are comparable \citep{allouah2025your}. These settings are not the direct object of our formal analysis, but they reflect a similar position-dependent mechanism in transformers.

\textbf{Positional encoding.} Recall that the code length of each item is $L$, and up to time $t$ there are $T = t \times L$ encoder token embeddings $H^{\text{enc}}=(\bh^{\text{enc}}_1, \dots, \bh^{\text{enc}}_T)$. Fix a decoder layer $m \in [M^{\text{dec}}]$ and a decoder position $l \in [L]$. We treat the encoder memory $H^{\text{enc}}$ as fixed vectors and focus on the analysis of one decoder cross-attention head (which critically connects history to the current decoding step $l$). In the following, we suppress layer superscripts $(m), ~\text{cross}$ for readability. 

In Section~\ref{sec:decoder}, we wrote cross-attention in its simplest form, without positional encodings (PEs), since the attention rule itself does not require positional information. Here, we incorporate PEs, as most practical transformer implementations include some form of positional signal. Let $\widetilde{l} := l+T$, and define the query and key vectors
\begin{align}
    \bq_l := \widetilde{\by}^{(m-1)}_l W_Q, \quad \bk_j := \bh^{\mathrm{enc}}_j W_K, \nonumber
\end{align}
where $\widetilde{\by}^{(m-1)}_l = \widetilde{Y}^{(m-1)}_{l,:}$ is the decoder state at step $l$ after masked self-attention. We consider the widely used Relative positional encoding (RPE). Define the attention \emph{logit} (relevance score) from decoder position $l$ to history position $j$ by
\begin{align}
    Z_{l,j}(\alpha) := \underbrace{\frac{\langle \bq_l,\bk_j\rangle}{\sqrt{d^{\text{cross}}}}}_{\text{content score}} + \underbrace{\alpha \cdot b(|\widetilde{l} - j|)}_{\text{positional term}}, \label{eq:RPE_Z}
\end{align}
where $b(\cdot)$ is a prescribed distance function and $\alpha\ge 0$ scales the overall strength of the positional effect. Thus, the positional term depends only on the relative distance $|\widetilde{l}-j|$ between the current decoder position and the history position. A special case is the ALiBi-style RPE, in which $b(x) = -x$.

The attention \emph{weight} on history token $j$ is the softmax of the corresponding logits:
\begin{align}
    A_{l,j}(\alpha) := \frac{\exp\left(Z_{l,j}(\alpha)\right)}{\sum_{j'=1}^T \exp\left(Z_{l,j'}(\alpha)\right)}. \label{eq:RPE_A}
\end{align}
The cross-attention output is then the weighted average of the value vectors:
\begin{align}
    \by_l = \sum_{j=1}^T A_{l,j}(\alpha) \bv_j =
    \sum_{j=1}^T A_{l,j}(\alpha) \bh^{\text{enc}}_j W_V, \label{eq:rPE_cross_output}
\end{align}
where $\bv_j:=\bh^{\text{enc}}_j W_V$ is the value vector. We impose a mild condition satisfied by common RPE schemes in which the position-based boost is larger for closer tokens.

\begin{assumption}[Near positions receive weakly larger positional logits] \label{asp:RPE_monotone}
For all $i \in S$ and $j\in F$, we have $b(|\widetilde{l}-i|) \geq b(|\widetilde{l}-j|)$.
\end{assumption}

\textbf{Measuring positional bias.} To quantify whether the model places more weight on recent (near) history or on older (far) history, fix a set of indices $S \subseteq \{1,\ldots,T\}$ and interpret it as the near window (for example, the most recent few tokens or interactions). Let $F := \{1,\ldots,T\} \setminus S$ denote the complement, corresponding to older history. For either positional encoding, define the total attention mass assigned to the near window by
\begin{align}
    M_S(\alpha) := \sum_{i\in S} A_{l,i}(\alpha).
    \label{eq:rPE_M}
\end{align}
Since the attention weights $A_{l,i}(\alpha)$ in \eqref{eq:RPE_A} sum to one, $M_S(\alpha) \in[0,1]$ is simply the fraction of the decoder's attention budget allocated to the near part of the history when forming the state at step $l$. If $S$ corresponds to the user's most recent behaviors, then a larger value of $M_S(\alpha)$ indicates that the model relies more heavily on recent evidence, whereas a smaller value indicates that attention is spread more broadly across older history. In this sense, $M_S(\alpha)$ serves as a natural positional-bias diagnostic.

\begin{proposition}[Relative PE tilts attention toward near history]
\label{prop:position}
Under Assumption \ref{asp:RPE_monotone}, for any $\alpha>0$,
\begin{align}
    \frac{d}{d\alpha}M_S(\alpha) \geq 0. \nonumber
\end{align}
where $M_S(\alpha)$ is defined in \eqref{eq:rPE_M} and represents the total attention mass assigned to the near token set. That is, increasing the positional strength $\alpha$ makes the model rely more on the near/recent part of a user’s history and less on older behaviors, even when the content-similarity terms are unchanged.
\end{proposition}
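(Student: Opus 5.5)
The plan is to differentiate the softmax weights directly in $\alpha$ and write $\frac{d}{d\alpha}M_S(\alpha)$ as a covariance-type expression whose sign follows from Assumption \ref{asp:RPE_monotone}. Write $b_j := b(|\widetilde{l}-j|)$ for brevity. By \eqref{eq:RPE_Z}, only the positional term depends on $\alpha$, so $\partial Z_{l,j}/\partial\alpha = b_j$ while the content scores are fixed. The standard softmax derivative then gives
\begin{align}
    \frac{d}{d\alpha}A_{l,i}(\alpha) = A_{l,i}(\alpha)\Big(b_i - \bar b(\alpha)\Big), \qquad \bar b(\alpha) := \sum_{j=1}^T A_{l,j}(\alpha)\, b_j, \nonumber
\end{align}
where $\bar b(\alpha)$ is the attention-weighted average positional logit.

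Next, sum over $i\in S$. With $M_S = \sum_{i\in S}A_{l,i}$ and $M_F = 1-M_S$, write $\bar b = M_S \bar b_S + M_F \bar b_F$. Here $\bar b_S$ and $\bar b_F$ are the attention-weighted averages of $b$ within $S$ and within $F$; either is defined arbitrarily if its mass is zero. This gives
\begin{align}
    \frac{d}{d\alpha}M_S(\alpha) = M_S(\bar b_S - \bar b) = M_S M_F\,(\bar b_S - \bar b_F). \nonumber
\end{align}
Equivalently, the derivative is the covariance, under the attention distribution, between the indicator $\mathbf 1\{j\in S\}$ and $b_j$. It can also be written as the double sum $\sum_{i\in S}\sum_{j\in F}A_{l,i}A_{l,j}(b_i-b_j)$.

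The double-sum form is the cleanest route to the sign, because it avoids dividing by possibly tiny masses. Assumption \ref{asp:RPE_monotone} gives $b_i\ge b_j$ for every $i\in S$ and $j\in F$. Every term is therefore nonnegative, since the attention weights are positive. Hence $\frac{d}{d\alpha}M_S(\alpha)\ge 0$.

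There is no real obstacle; the only care needed is the algebra showing the cross terms within $S$ cancel. Expanding $\sum_{i\in S}A_i(b_i - \sum_j A_j b_j)$ with $\sum_j A_j = 1$ gives $\sum_{i\in S}\sum_{j}A_iA_j(b_i-b_j)$. The $j\in S$ part is antisymmetric and vanishes, leaving only $i\in S$, $j\in F$. A brief remark can note that the inequality is strict unless $b$ is constant across $S\times F$. This follows because all softmax weights are strictly positive.
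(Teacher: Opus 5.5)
Your proposal is correct and follows essentially the same route as the paper: differentiate the softmax to get $\sum_{i\in S}A_{l,i}(b_i-\bar b)$, rewrite it as $M_S M_F(\bar b_S-\bar b_F)$, and sign it via Assumption~\ref{asp:RPE_monotone}. Your double-sum form $\sum_{i\in S}\sum_{j\in F}A_{l,i}A_{l,j}(b_i-b_j)$ is a slightly cleaner finish than the paper's $\min_{S}\ge\max_{F}$ comparison of group means. It needs no division by $M_S$ or $M_F$, so it covers an empty $S$ or $F$ without special comment.
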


The proof of Proposition \ref{prop:position} can be found in Appendix \ref{pf:prop_position}. In addition, we provide position patterns for the rotary PE (RoPE), another common positional encoding, in Appendix \ref{pf:RoPE}.

\textbf{Anchored vs. myopic personalization.} Proposition \ref{prop:position} shows how shifting $\alpha$ trades off responsiveness (reacting to recent signals) versus anchoring (staying consistent with longer-run preferences). In most recommendation deployments, ``near'' positions are the user's most recent clicks/views/purchases. Proposition \ref{prop:position} then formalizes a clean mechanism: $\alpha$ acts as a recency knob.
\begin{itemize}
    \item \emph{Large $\alpha$ (myopic personalization).} The system reacts quickly to short-lived tastes and session-level intent, because recent interactions receive systematically higher attention weight. This can improve short-run conversion (e.g., capitalizing on transient intent), but it can also suppress stable long-run preferences and long-tail diversity.
    
    \item \emph{Small $\alpha$ (anchored personalization).} Attention becomes less position-driven and more content-driven, so older but persistent interests retain influence. This stabilizes recommendations, supports broader exploration, and can reduce overreaction to noise, but may slow adaptation to genuine shifts in preference.
\end{itemize}
From an OM perspective, this is analogous to a classical responsiveness-stability trade-off (e.g., choosing a larger vs. smaller forgetting factor in forecasting/control).

Proposition \ref{prop:position} suggests two practical levers that map directly to platform objectives: (i) tune positional strength $\alpha$ (or an equivalent scaling in implementation) by product surface (e.g., ``home feed'' vs. ``search follow-up''); (ii) choose the shape of $b(\cdot)$ (how fast weights decay with distance) to control how quickly the system forgets. The same responsiveness-stability trade-off matters in any attention-driven AI systems that make sequential decisions from ordered context, which is consistent with empirical findings on long-context LLMs, including ``lost in the
middle'' behavior and attention sinks \citep{liu2024lost, xiao2023efficient}, as well as display-position effects of agentic shopping systems \citep{allouah2025your}. These findings underscore why positional bias can be regarded as a general operational reliability risk in attention-based AI decision systems.

Meanwhile, we argue that some degree of positional bias is essentially unavoidable, since any non-trivial PE (learned or fixed, absolute or relative) makes influence depend on position. To model ordered histories, the architecture must encode order information, and any non-trivial order signal makes otherwise identical interactions exert different influences when they appear at different positions. Only a fully permutation-invariant model (which discards order) would eliminate positional effects, but that would forgo recency and sequential patterns that are central in recommendation and attention-driven systems.

\subsection{Popularity Bias: Matthew Effects and Echo Chambers}\label{sec:popularity}

A recurring empirical concern in recommender systems is the rich-get-richer pattern, in which popular items tend to receive further exposure, while niche items in the long tail gradually disappear from users' view. Related empirical observations on popularity/long-tail biases in transformer-based recommenders are discussed in \cite{he2023large} and \cite{dai2024bias}. While our analysis focuses on recommenders, a similar generative monoculture has also emerged in broader LLM-style settings. For example, \citet{wu2025generative} find that in book reviews, model-generated reviews are biased toward only positive sentiment. This further supports the view that an attention-based weighting rule can systematically favor what is already frequent or salient.

In this section, we provide a mechanism-level explanation of how such popularity amplification can arise inside a single decoder cross-attention head. The logic is intuitive: (i) when training data are imbalanced, early gradient updates tend to give more frequent tokens a representation advantage; and (ii) the dot-product-softmax attention map converts even a modest representation advantage into a disproportionate, and often exponential, exposure advantage.

\textbf{Token exposure.} Fix one decoder cross-attention head, at a given decoder layer $m \in [M^{\text{dec}}]$ and decoding step $l \in [L]$. To streamline notation, we suppress the superscripts of $(m),\text{cross}$, and $\text{enc}$, as well as the subscript $l$. Recall that $T = t\times L$ is the number of history tokens in the encoder memory for the current prediction step. For each history position $j \in [T]$, cross-attention computes an attention logit (relevance score) $Z_j(\bq)$ that measures how relevant token $j$ looks to the current decoder state $\bq$, and an attention weight $A_j(\bq)$ obtained through softmax normalization:
\begin{align}
    Z_j(\bq) = \frac{\langle \bq, \bk_j\rangle}{\sqrt d}, \quad A_j(\bq) = \frac{\exp(Z_j(\bq))}{\sum_{j'=1}^{T}\exp(Z_{j'}(\bq))}, \quad j\in [T]. \label{eq:pop_ZA}
\end{align}
Here $\bq$ is the query vector at this given decoder step (what the model is trying to decide now), and $\bk_j$ is the key vector
for history token $j$ (what the model uses to assess the relevance of token $j$).

Many history positions may correspond to the same token ID, and the frequency with which a token appears in the history reflects its popularity. Let $\mathcal{H}$ denote the set of distinct token IDs that appear in the history window, and let $\bh_j \in \mathcal{H}$ denote the token ID at history position $j$. We define the token exposure (token-level attention mass) as the total attention weight assigned to all occurrences of token $\bh$ \footnote{Throughout the paper, we use boldface for vectors, but when a vector is indexed, we suppress the boldface and write $h$ instead of $\bh$.}:
\begin{align}
    M_h(\bq) := \sum_{j:\,\bh_j=\bh} A_j(\bq), \quad \bh \in \mathcal{H}. \label{eq:pop_M}
\end{align}

\textbf{Measuring popularity.} 
Because the attention weights sum to one, $M_h(\bq) \in [0,1]$ can be interpreted as the fraction of the attention budget allocated to token $\bh \in \mathcal{H}$ at this prediction step. We also write the set of positions where token $\bh$ appears:
\begin{align}
    S_h := \{j \in [T]: \bh_j=\bh\}, \label{eq:pop_Sh}
\end{align}
with $|S_h|$ being the number of occurrences of token $\bh$ in the history. We let 
\begin{align}
    p_h := |S_h|/T \label{eq:pop_p}
\end{align}
denote the frequency of token $\bh$ in the history.

In a recommender, a token that receives more attention mass is more influential because the attention output is a weighted average of value vectors. Thus, $M_h$ provides a direct measure of how much the model is relying on token $\bh$. If attention does not systematically favor any occurrence beyond its presence in the history,
then one would expect $M_h(\bq) \approx |S_h|/T = p_h$, and therefore $M_h(\bq)/M_{h'}(\bq) \approx p_h/p_{h'}$. This motivates comparing the exposure ratio $M_h/M_{h'}$ to the count ratio $p_h/p_{h'}$.

\textbf{Training objective.} For tractability, we use $-\log M_h(\bq)$ as a token-aggregated surrogate training objective for the true training objective $p_\theta(\cdot \mid H_t, c_{<l})$. LLMs and transformer-based recommenders are typically trained by minimizing a cross-entropy objective to predict the next token using stochastic gradient descent (SGD), most commonly with Adam/AdamW, an adaptive variant of SGD. To isolate the effect of history token frequency on token exposure, we study a surrogate objective that directly rewards the head for assigning attention mass to the correct token label. Specifically, at SGD step $n \in \{1, \ldots, N\}$, we observe a query-label pair $(\bq^{(n)},X^{(n)})$ from data (history interactions), where $X^{(n)} \in \mathcal{H}$ is the target token ID and $\bq^{(n)}$ is the corresponding query vector. We then compute the gradient of the loss with respect to the trainable parameters $\bmu_h^{(n)}$ (embeddings associated with token $\bh$), and update the parameters in the loss-reducing direction with step size determined by the learning rate $\eta$. 

To make the dependence on the trainable parameters explicit, we model the key vector of each history occurrence of token $\bh$ by a shared parameter $\bmu_h^{(n)}$. That is, at SGD step $n \in [N]$ (the training horizon $N$ counts how many SGD steps are performed), we let 
\begin{align}
    \bk_j^{(n)} = \bmu_{h_j}^{(n)}, \quad j\in[T]. \label{eq:pop_shared}
\end{align}
Hence, the attention score of position $j$ is $Z_j^{(n)}(\bq)=\langle \bq,\bmu_{h_j}^{(n)}\rangle /\sqrt{d}$, and the token-level exposure for token $\bh$ becomes
\begin{align}
    M_h^{(n)}(\bq; \bmu^{(n)}) = \sum_{j: \bh_j=\bh} \frac{\exp\left(\langle \bq, \bmu_h^{(n)}\rangle / \sqrt{d}\right)}{\sum_{j'=1}^T \exp\left(\langle \bq, \bmu_{h_{j'}}^{(n)}\rangle / \sqrt{d}\right)} = \frac{|S_h|\exp\left(\langle \bq, \bmu_h^{(n)}\rangle / \sqrt{d}\right)}{\sum_{\bh' \in \mathcal{H}} |S_{h'}|\exp\left(\langle \bq, \bmu_{h'}^{(n)}\rangle / \sqrt{d}\right)}. \label{eq:pop_M_n}
\end{align}
For a training example $(\bq^{(n)},X^{(n)})$ at SGD step $n$, we define the surrogate loss regarding token exposure (instead of the actual transformer training objective)
\begin{align}
    \ell_{X^{(n)}}\left(\bq^{(n)}; \bmu^{(n)}\right) := -\log M^{(n)}_{X^{(n)}}\left(\bq^{(n)}; \bmu^{(n)}\right), \label{eq:pop_surrogate_loss}
\end{align}
where the target token $X^{(n)} \in \mathcal{H}$ is drawn from the (possibly imbalanced) training distribution. To focus on the frequency-driven mechanism, we treat $\{\bmu_h\}_{\bh \in \mathcal{H}}$ as the only trainable parameters in this head and update them via SGD:
\begin{align}
    \bmu_h^{(n+1)} = \bmu_h^{(n)} - \eta \nabla_{\bmu_h}\ell_{X^{(n)}}\left(\bq^{(n)}; \bmu^{(n)}\right). \label{eq:pop_sgd_update}
\end{align}

\begin{remark}{\textbf{(Proxy consistency of the surrogate training objective)}} In the full transformer, training minimizes the standard negative log-likelihood (cross-entropy) of the next token under the model distribution $p_\theta(\cdot \mid H_t, c_{<l})$. Our analysis does not replace this training objective. Rather, it isolates a single cross-attention head and examines how frequency imbalance can translate into exposure imbalance via the dot-product-softmax mechanism. The surrogate loss in \eqref{eq:pop_surrogate_loss} uses $M_h(\bq)$ as a tractable proxy for ``how much this head supports token $\bh$'', because the decoder representation entering the output head is a weighted average of history information, and the total weight assigned to occurrences of token $\bh$ is exactly $M_h(\bq)$. Under the shared-key modeling (occurrences of the same token have similar key/value vectors, see \eqref{eq:pop_shared}), increasing $M_h(\bq)$ increases the contribution of token $\bh$ to the decoder state, which tends to raise its logit and hence its predicted probability under the true cross-entropy objective. Therefore, analyzing $-\log M_h(\bq)$ cleanly captures the same objective in the full model.
\end{remark}

\begin{remark}{\textbf{(From token-level amplification to item-level popularity bias)}}
The object of our analysis is the token exposure $M_h(\cdot)$. For semantic-ID recommenders, an item is generated as a token sequence $\phi(i)=(c_1(i), \dots, c_L(i))$. Token-level amplification implies that items composed of more \emph{head tokens} gain a systematic advantage over items that rely on \emph{tail tokens}. Thus, long-tail items can be under-exposed even if the model is not explicitly trained to ``promote best sellers.'' For item-ID recommenders ($L=1$), token-level and item-level popularity coincide.
\end{remark}

To facilitate analysis, we impose the following assumptions, which are natural when different occurrences of the same token yield similar keys (common in both item-ID ($L=1$) and semantic-ID settings).

\begin{assumption}\label{asp:pop}
For each SGD step $n \geq 0$ and training query-label pair $(\bq^{(n)},X^{(n)})$, we assume:

(i) \textbf{Bounded queries:} There exists a finite constant $B_q>0$ such that $\|\bq^{(n)}\|\le B_q$ for all $n \geq 0$. 

(ii) \textbf{I.i.d. query-label pairs:} The training pairs $\{(\bq^{(n)}, X^{(n)})\}_{n \geq 0}$ are i.i.d. from a common joint distribution. The label satisfies
$\Pr(X^{(n)} = \bh) = p_h$ and the conditional query mean $\bw_h := \mathbb{E}[\bq^{(n)} \mid X^{(n)} = \bh]$ exists for all $\bh \in \HHH$.

(iii) \textbf{Symmetric initialization:} $\bmu^{(0)}_h = \bmu^{(0)}$ for all $\bh \in \HHH$, for some common initialization vector $\bmu^{(0)}$.

(iv) \textbf{Early-stage persistence regime:} The frequency-driven advantage created during the early stage of SGD is not fully offset by later updates by the stopping time.
\end{assumption}
Assumption \ref{asp:pop}(i) is a standard regularity condition in transformer-related theoretical analysis. Assumption \ref{asp:pop}(ii) specifies a stationary data-generating law for the query-label pairs used by SGD. In practice, queries are generated from sequential contexts and need not be independent across samples. Here, we adopt an i.i.d. pair model in order to isolate the mechanism-level role of token popularity or frequency $p_h$ in the update dynamics. No requirement between $\bq^{(n)}$ and $X^{(n)}$ is assumed. By the law of total expectation, 
\begin{align}
    \bar{\bw} := \mathbb{E}\left[\bq^{(n)}\right] = \sum_{\bh \in \HHH} \Pr(X^{(n)} = \bh) \mathbb{E}\left[\bq^{(n)} \Big| X^{(n)}=\bh \right] = \sum_{\bh \in \HHH} p_h \bw_h. \label{eq:pop_barw}
\end{align}
Assumption \ref{asp:pop}(iii) is the usual symmetry condition at initialization of SGD training. Before training, token-specific parameters do not yet encode differential exposure advantages. This provides a neutral starting point from which any subsequent popularity-aligned asymmetry must be created by the learning dynamics rather than imposed ex ante. Assumption \ref{asp:pop}(iv) is consistent with common stopping rules based on improvements in the average loss $\ell_{X^{(n)}}$ across tokens. Under such rules, low-frequency tokens (with smaller $p_h$) contribute less to the aggregate loss improvement, so training may stop before they are fully corrected relative to high-frequency tokens. While a general proof of this phenomenon is difficult, prior work suggests that later training need not undo early advantages in SGD learning \citep{chen2023adap, kleinmancritical}.

Under these assumptions, we first show that more frequent tokens receive stronger directional reinforcement during an analytically tractable early-stage window. We then study how this frequency-weighted early advantage can translate into a larger final exposure ratio when it is not fully offset by later training.

\begin{proposition}[Popularity amplification]\label{prop:pop_bias}
Under Assumption \ref{asp:pop}, fix a test query $\bq$. For a training horizon $\widehat{N}$ and any $\bh, \bh' \in \mathcal{H}$ with $|S_h|,|S_{h'}| \geq 1$, define 
\begin{align}
    \text{AR}(\bh,\bh') := \frac{\mathbb{E}\left[M_h(\bq) / M_{h'}(\bq)\right]}{p_h/p_{h'}}, \label{eq:pop_AR_def}
\end{align}
where the expectation is taken over training randomness $\{(\bq^{(n)}, X^{(n)})\}_{n \geq 0}$. Then,
\begin{align}
    \text{AR}(\bh, \bh') \geq \exp\left(\frac{\eta \widehat{N}}{d} \left\langle \bq, p_h(\bw_h - \bar{\bw}) - p_{h'} (\bw_{h'} - \bar{\bw})\right\rangle - \frac{2\xi^{(\widehat{N})} \|\bq\|}{\sqrt{d}}\right), \label{eq:pop_AR_LB}
\end{align}
where $\xi^{(\widehat{N})}$ is a small residual term upper-bounding the effect of later-stage updates. In particular, when query $\bq$ is comparably aligned with $\bw_h - \bar{\bw}$ and $\bw_{h'} - \bar{\bw}$, the final relative exposure is amplified in favor of the more frequent/popular token.
\end{proposition}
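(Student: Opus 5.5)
The plan is to reduce the exposure ratio to a difference of learned keys, and then control that difference through the expected SGD drift. From \eqref{eq:pop_M_n}, the denominators of $M_h$ and $M_{h'}$ coincide. Hence
\begin{align}
    \frac{M_h(\bq)}{M_{h'}(\bq)} = \frac{|S_h|}{|S_{h'}|}\exp\left(\frac{\langle \bq, \bmu_h^{(\widehat N)} - \bmu_{h'}^{(\widehat N)}\rangle}{\sqrt d}\right), \nonumber
\end{align}
and since $|S_h|/|S_{h'}| = p_h/p_{h'}$, the ratio $\text{AR}(\bh,\bh')$ is exactly $\mathbb{E}\exp(\langle \bq, \Delta^{(\widehat N)}\rangle/\sqrt d)$ with $\Delta^{(n)} := \bmu_h^{(n)}-\bmu_{h'}^{(n)}$. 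By Jensen's inequality (convexity of $\exp$), $\text{AR} \ge \exp(\langle \bq, \mathbb{E}\Delta^{(\widehat N)}\rangle/\sqrt d)$. It therefore suffices to lower bound $\langle \bq, \mathbb{E}\Delta^{(\widehat N)}\rangle$.

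Next I compute the gradient. For label $X$ and query $\bq^{(n)}$, we have
\begin{align}
    \nabla_{\bmu_h}\ell_X = -\frac{1}{\sqrt d}\left(\mathbf 1\{X=\bh\} - M_h^{(n)}(\bq^{(n)})\right)\bq^{(n)}. \nonumber
\end{align}
By symmetric initialization (Assumption \ref{asp:pop}(iii)), at $n=0$ every key is equal, so $M_h^{(0)}(\bq) = p_h$ for any query. In the early-stage window the model therefore stays close to this symmetric point, and I will linearize around it by writing $M_h^{(n)} = p_h + r_h^{(n)}$. Taking expectations with Assumption \ref{asp:pop}(ii) and \eqref{eq:pop_barw} gives the leading drift
\begin{align}
    \mathbb{E}\left[\bmu_h^{(n+1)}-\bmu_h^{(n)}\right] \approx \frac{\eta}{\sqrt d}\left(p_h \bw_h - p_h\bar\bw\right) = \frac{\eta}{\sqrt d}\, p_h(\bw_h-\bar\bw). \nonumber
\end{align}
Summing over $\widehat N$ steps yields $\mathbb{E}\Delta^{(\widehat N)} = \frac{\eta \widehat N}{\sqrt d}\left(p_h(\bw_h-\bar\bw) - p_{h'}(\bw_{h'}-\bar\bw)\right) + R^{(\widehat N)}$. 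Dividing by $\sqrt d$ produces the claimed $\eta\widehat N/d$ term.

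The remainder $R^{(\widehat N)}$ collects the deviations of $M_h^{(n)}$ from $p_h$, weighted by $\bq^{(n)}$, together with any later-stage updates. I define $\xi^{(\widehat N)}$ as a bound on $\max_{\bh}\|\mathbb{E}\sum_n \eta(\cdot)\ \text{deviation terms}\|$ for each of $\bh$ and $\bh'$. Cauchy--Schwarz then gives $|\langle \bq, R\rangle|/\sqrt d \le 2\xi^{(\widehat N)}\|\bq\|/\sqrt d$, and this is exactly the subtracted term in \eqref{eq:pop_AR_LB}.

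The main obstacle is justifying that this residual is genuinely small. The deviation $r_h^{(n)}$ grows with $\|\bmu^{(n)}-\bmu^{(0)}\|\le n\eta B_q/\sqrt d$, which follows from bounded queries (Assumption \ref{asp:pop}(i)) together with $|\mathbf 1 - M|\le 1$. The softmax is also Lipschitz in the logits, so $|r_h^{(n)}| = O(n\eta B_q^2/d)$, and the accumulated early-stage error is $O(\eta^2 \widehat N^2 B_q^3/d^{3/2})$. Anything beyond the early window is absorbed into $\xi^{(\widehat N)}$ by Assumption \ref{asp:pop}(iv). I would state this explicitly, so that $\xi^{(\widehat N)}$ is a well-defined quantity that is small whenever $\eta \widehat N B_q^2/d \ll 1$ and the persistence assumption holds. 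The final remark of the proposition then follows directly from the formula: if $\bq$ is comparably aligned with $\bw_h-\bar\bw$ and $\bw_{h'}-\bar\bw$, the exponent is positive exactly when $p_h > p_{h'}$.
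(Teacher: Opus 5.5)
Your proposal is correct and follows the paper's proof essentially step for step. You cancel the common softmax denominator so that $\text{AR}(\bh,\bh')=\mathbb{E}\exp\bigl(\langle\bq,\bmu_h^{(\widehat N)}-\bmu_{h'}^{(\widehat N)}\rangle/\sqrt d\bigr)$, apply Jensen, obtain the early-stage drift $\frac{\eta p_h}{\sqrt d}(\bw_h-\bar{\bw})$ by linearizing $M_h^{(n)}$ around $p_h$ (the content of Lemma~\ref{lem:pop_mu}), and control the leftover by Cauchy--Schwarz.

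The only difference is in the remainder. You define $\xi^{(\widehat N)}$ as the exact accumulated remainder and bound its early-window part explicitly by $O(\eta^2\widehat N^2B_q^3/d^{3/2})$. The paper instead simply postulates the persistence decomposition \eqref{pf:pop_persist} with $\|\br_h^{(\widehat N)}\|\le\xi^{(\widehat N)}$, so your version is a mild sharpening of the same argument.
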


The proof of Proposition \ref{prop:pop_bias} can be found in Appendix \ref{sec:pf_pop_bias}, with a discussion of the early-advantage persistence regime (see Remark \ref{rem:early_window}).

\textbf{When the rich get richer.} In generative recommenders, the same amplification characterized in Proposition \ref{prop:pop_bias} occurs at each decoding step. Accordingly, the resulting bias is not merely a one-shot ranking artifact, but it can compound across tokens and sessions, creating a feedback loop in which high exposure increases future observations and, in turn, further strengthens the head. This provides a concrete mechanism for the rich-get-richer phenomenon: early advantages can compound over time and across tokens, increasing the likelihood that already frequent patterns recur. Proposition \ref{prop:pop_bias} formalizes this mechanism by quantifying how dot-product-softmax attention heads can amplify the frequency imbalance in the data. This yields two managerial implications depending on the scope at which the  training distribution is defined:
\begin{itemize}
    \item \emph{Platform-level concentration and the Matthew effect (global training).} When training data are pooled across users, the frequency parameters $p_h$ correspond to global token (or item-ID) popularities. Proposition \ref{prop:pop_bias} then implies that, even in the absence of any explicit ``promote best-sellers'' rule, frequent tokens/items receive disproportionate attention mass and hence higher exposure. Operationally, this predicts a \emph{Matthew effect}: initially popular products, artists, or topics tend to become even more visible over time, leading to greater concentration in sales (shopping platforms), plays (streaming platforms), or traffic (content and news platforms). This reinforces the concern that purely data-driven recommenders can exacerbate demand concentration and reduce long-tail discovery. 

    \item \emph{User-level concentration and echo chambers (personalized training).} In many settings, the training data can be user-specific (or cluster-specific) logs. In that case, the relevant frequency becomes conditional on the user, $p_{u,h}:=\Pr(X= \bh \mid u)$. Applying the same mechanism, with $p_h$ replaced by $p_{u,h}$, implies that the system can amplify within-user frequency patterns. Tokens that dominate a user's own past interactions receive multiplicatively greater attention mass in subsequent generations. Thus, even when global item popularity is balanced across categories, the model may still produce personal \emph{echo chambers}: each user is repeatedly exposed to similar categories, attributes, or narratives because those tokens are overrepresented in that user's training data.
\end{itemize}

We finally note that some degree of popularity bias is hard to avoid in attention-based systems beyond recommendation. As long as (i) the training data are even mildly imbalanced and (ii) the model uses dot-product-softmax weightings, any learned score separation aligned with frequency is translated into multiplicative exposure differences. It is unrealistic to expect such bias to dissipate spontaneously. Therefore, attention-based or LLM-style AI systems may naturally drift toward concentration, homogenization, or monoculture (\citealt{wu2025generative}). Mitigating these tendencies requires deliberate intervention, such as reweighting or reshaping the training data, regularizing the attention sharpness, or imposing explicit exposure constraints, although such interventions may in turn introduce over-smoothing.

\subsection{Latent Driver Bias: Brittle Personalization}\label{sec:cloning}

Recommender systems learn from past actions to predict what a user will want next. In practice, however, the next click, watch, or purchase is often shaped by \emph{latent drivers} such as mood, time pressure, social context, off-platform information, or spontaneous impulses that do not appear in platform logs. When these unobserved factors matter, personalization can become brittle: small, essentially random differences in the observed history may lead to large swings in what the system recommends. This brittleness can manifest as inconsistent recommendations across seemingly similar users or sessions, overreaction to a few idiosyncratic events, and a weaker ability to reliably ``clone'' (reproduce) user behavior from logs alone
\citep{oh2022rank, liang2025artificial}.

In this section, we model the attention logits (relevance scores) inside attention as noisy proxies for ``what mattered'' when the true driver is only partially observed. The dot-product scores may differ slightly purely because of unobserved noise, yet the softmax step converts these additive score differences into multiplicative differences in attention odds. In particular, even when two history tokens have the same mean relevance score, latent-driver noise can cause one token to receive several times more attention than the other, making the recommender behave as if it has found a clear explanation in the logs even when the underlying choice process remains genuinely uncertain.

\textbf{Cross-attention under latent drivers.} 
We focus on a single decoder cross-attention head at a fixed decoder layer $m$ and decoding step $l$. For notational convenience, we omit the superscripts $(m),\text{cross}$ and the subscript $l$. The attention head assigns a weight $p_j$ to each history token $j \in [T]$, where $p_j \geq 0$ and $\sum_{j=1}^T p_j=1$. These weights indicate the extent to which each past event contributes to the decoder’s current decision. A larger $p_j$ indicates that the model relies more on the history token $j$ in the current step. 

Before normalization, the head computes an attention logit (relevance score) $Z_j$ for each history token $Z_j(\bq)=\langle \bq,\bk_j\rangle/\sqrt{d}$ (the same as in \eqref{eq:pop_ZA}), where $\bq$ is the query vector summarizing what the decoder is trying to decide at this step, and $\bk_j$ is the key vector representing history token $j$. Here, we analyze the distributional effects of additive logit perturbations conditional on the query $\bq$, since $\bq$ merely scales the logits. Thus $\bq$ enters only through the resulting variance parameters of the centered logits and is absorbed into $\{\sigma_j^2\}$, which will be introduced shortly. Accordingly, we suppress the explicit dependence on $\bq$ and write the attention weights as
\begin{align}
    p_j := A_j = \frac{\exp(Z_j)}{\sum_{k=1}^T \exp(Z_k)}, \quad j\in [T]. \nonumber
\end{align}
If user actions are partly driven by latent factors (mood, context, off-platform information, etc.), then the model’s relevance scores are inevitably noisy. We capture this by introducing a random perturbation to the logits $Z = \{Z_j\}^T_{j=1}$ \citep{liang2025artificial}. 

\textbf{Noise-only regime.} To isolate how unobserved impulse (modeled as logit noise) propagates through the dot-product-softmax map into attention, we consider a ``noise-only'' regime, as defined in the following Assumption \ref{asp:clone_noise}. Note that this does not claim the model is driven only by noise. Rather, it quantifies the mechanism by which uncertainty in relevance scores can turn into concentrated attention.

\begin{assumption}[Gaussian logit noise]\label{asp:clone_noise}
The logits are independent Gaussian:
\begin{align}
   Z_j \sim \mathcal{N}(0,\sigma_j^2), \quad \text{independent across } j\in[T], \nonumber
\end{align}
and the attention weight vector is given by $\bp=\softmax(Z)$.
\end{assumption}

\textbf{Measuring brittleness.} We measure recommendation brittleness using the pairwise odds ratio $p_j/p_k$, which directly captures how much more the model trusts event $j$ than event $k$. In particular, for any two history tokens $j$ and $k$, the attention ratio satisfies
\begin{align}
    \frac{p_j}{p_k} = e^{Z_j - Z_k},
\end{align}
so the ratio is driven entirely by the random relative perturbation $Z_j - Z_k$, and large values of $p_j/p_k$ arise purely from noise in the latent drivers. The following Proposition \ref{prop:clone_bias} (with proof in Appendix \ref{sec:pf_cloning}) shows that greater latent uncertainty $\{\sigma_j\}^T_{j=1}$ makes the model more likely to place disproportionately large weight on one observed interaction over another.

\begin{proposition}[Uncertainty-driven attention amplification]\label{prop:clone_bias}
Under Assumption \ref{asp:clone_noise}, for history tokens $j, k \in [T]$,
\begin{align}
    \log\left(\frac{p_j}{p_k}\right) \sim \mathcal{N}(0, \sigma_j^2 + \sigma_k^2). \nonumber
\end{align}
Consequently,
\begin{align}
    \mathbb E\left[\frac{p_j}{p_k}\right] = \exp\left(\frac{\sigma_j^2+\sigma_k^2}{2}\right), \quad \text{and} \quad \Pr\left(\frac{p_j}{p_k}\ge c \right) = 1 -\Phi\left(\frac{\log c}{\sqrt{\sigma_j^2+\sigma_k^2}}\right) ~\forall c>1. \nonumber
\end{align}
That is, additive Gaussian perturbations in the logit gap are exponentiated into multiplicative perturbations in relative attention.
\end{proposition}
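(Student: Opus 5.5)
The plan is to reduce everything to the logit gap $Z_j - Z_k$. First, the normalizing constant cancels in the ratio:
\begin{align}
    \frac{p_j}{p_k} = \frac{\exp(Z_j)/\sum_{l}\exp(Z_l)}{\exp(Z_k)/\sum_{l}\exp(Z_l)} = e^{Z_j - Z_k}. \nonumber
\end{align}
This identity holds pathwise, so no distributional argument is needed at this stage. It gives $\log(p_j/p_k) = Z_j - Z_k$.

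Next, invoke Assumption \ref{asp:clone_noise}. Since $Z_j$ and $Z_k$ are independent centered Gaussians, $-Z_k \sim \mathcal{N}(0,\sigma_k^2)$, and the sum of independent Gaussians is Gaussian with the variances added. Hence $Z_j - Z_k \sim \mathcal{N}(0, \sigma_j^2+\sigma_k^2)$. If one wants this from first principles, multiplying characteristic functions gives $\exp(-s^2(\sigma_j^2+\sigma_k^2)/2)$. Throughout we take $j \neq k$; for $j=k$ the ratio is identically one.

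The two consequences follow from this Gaussian law. For the mean, write $G := Z_j - Z_k$ with $s^2 := \sigma_j^2+\sigma_k^2$. The lognormal moment formula (the moment generating function of $\mathcal{N}(0,s^2)$ evaluated at $1$) gives
\begin{align}
    \mathbb{E}\left[e^{G}\right] = e^{s^2/2}. \nonumber
\end{align}
For the tail, take $c>1$. Since $\log$ is strictly increasing,
\begin{align}
    \Pr\left(\frac{p_j}{p_k}\ge c\right) = \Pr(G \ge \log c) = 1-\Phi\left(\frac{\log c}{s}\right), \nonumber
\end{align}
where the last step standardizes $G/s \sim \mathcal{N}(0,1)$. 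This requires $s>0$, that is, at least one of $\sigma_j, \sigma_k$ must be positive. If both are zero, the ratio is degenerate at one, and the formula should be read with $\Phi(+\infty)=1$.

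There is no real obstacle in this argument. The only points needing care are the cancellation of the softmax denominator, which holds pathwise, and the degenerate cases $j=k$ and $s=0$, which I would note explicitly.
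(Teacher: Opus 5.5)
Your proof is correct and follows the paper's argument. You cancel the softmax normalizer to get $p_j/p_k = e^{Z_j - Z_k}$ and show $Z_j - Z_k \sim \mathcal{N}(0,\sigma_j^2+\sigma_k^2)$; the paper gets this via the quadratic form $(\boldsymbol{e}_j-\boldsymbol{e}_k)^\top\Sigma(\boldsymbol{e}_j-\boldsymbol{e}_k)$ with diagonal $\Sigma$, you via independence directly, and both then read off the lognormal mean and standardized tail. Your caveats on $j=k$ and $\sigma_j=\sigma_k=0$ are apt refinements the paper leaves implicit.
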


\textbf{Limits under latent drivers.} Proposition \ref{prop:clone_bias} highlights an intrinsic mismatch between logged choices and the latent drivers that ultimately shape behavior. Softmax is an exponential gate on score differences. Before softmax, uncertainty enters additively through the logit gap $Z_j - Z_k$. After softmax, the same gap becomes the multiplicative odds ratio $e^{Z_j - Z_k}$. For example, $Z_j-Z_k = 1$ yields $p_j/p_k \approx 2.72$ and $Z_j - Z_k = 2$ yields $p_j/p_k \approx 7.39$. Therefore, even when two history tokens have the same mean relevance, unobserved shocks can cause the model to rely several times more on one token than on another. Operationally, the recommender behaves as if a few logged interactions provide a more dominant explanation for the next action than the rest do, even when the underlying preference process is only partially observed. This generates brittle personalization, where the system may fit logged click traces well yet still mis-clone the true decision process, especially in high-uncertainty settings such as new users, cold-start items, or volatile sessions.

More broadly, the same mechanism arises whenever an attention-based AI system must infer intent from an incomplete context. Because softmax exponentiates the latent logit gap, missing situational information does not merely add noise, but it makes large winner-take-most attention allocations in relative attention more likely. For platform design, potential mitigation approaches include collecting richer contextual signals to reduce latent uncertainty, using temperature or other attention-smoothing regularization to dampen excessive pairwise dominance, and incorporating explicit exploration or uncertainty-aware policies when the effective logit noise is high.

\subsection{Synthetic Data Bias: Retraining on AI-Generated Logs}\label{sec:delegation}

The effects of training on synthetic data have been studied extensively, and the underlying risk is not new. What is new and prominent in the LLM era is how easily decision-making can shift from users to agents, as LLM-based assistants are now widely accessible and sufficiently capable that many users follow their suggestions with little additional search. Recent studies have also examined the use of LLM-generated responses in market research, when the gap between human and AI behavior is sufficiently limited or can be appropriately calibrated \citep{wang2024large, yin2026synthetic, baek2026evaluating}. As users increasingly delegate choices/responses to AIs, platform logs are no longer driven primarily by independent human exploration. Instead, a growing share of recorded interactions is agent-shaped data, reflecting what the agent tends to propose. 

Recent LLM evidence shows that when models are repeatedly retrained on such synthetic or agent-generated interactions, output diversity tends to shrink and rare modes are gradually lost over successive rounds of training \citep{shumailov2024ai, alemohammad2023self}. Importantly, this concentration need not simply mirror the original popularity pattern in human logs. It can also reflect the model’s own stylistic preferences or default tendencies. In this section, we provide a simple mechanism-level account of how retraining on agent-shaped data alone can push a system toward progressively concentrated outputs.

\textbf{A closed-loop retraining model.} We analyze retraining within a single context state $j$, which can be seen as a coarse bucket of ``similar decision situations'' regarding history $(H_t, c_{<l})$, e.g., histories and partial outputs that look alike to the model. Within the same bucket, the next choice is among $s$ discrete tokens, and we index these options by $k \in [s]$. Let $\bp^{(0)}_j=(p^{(0)}_{j1},\ldots,p^{(0)}_{js})$ denote the organic conditional distribution in context $j$ (what users would choose without delegating to the AI). At deployment/retraining round $r=0,1,2,\ldots$, the system produces a conditional distribution $\bp^{(r)}_j$ in the same context $j$.

To model delegation, we assume that the training data collected in round $r$ within context $j$ consist of a mixture of two sources:
\begin{itemize}
    \item \emph{Organic (real) data}: $N_j$ fresh user choices drawn i.i.d. from $\bp^{(0)}_j$;
    \item \emph{Delegated (synthetic) data}: for $r \geq 2$, $\widehat{N}_j$ AI-shaped choices drawn i.i.d. from the previously deployed model distribution $\bp^{(r-1)}_j$.
\end{itemize}
We assume both $N_j>1$ and $\widehat{N}_j>1$. Define the delegation share in the logs by
\begin{align}
    \alpha := \frac{\widehat{N}_j}{N_j+\widehat{N}_j} \in (0,1). \label{eq:del_alpha}
\end{align}
A larger $\alpha$ means a larger fraction of the data in context $j$ comes from delegated (AI-shaped) actions. Under standard cross-entropy training, for a fixed context bucket $j$, the best-fitting categorical predictor is the empirical histogram of next outcomes observed in that context. In Appendix \ref{pf:del_empirical}, we formalize this frequency-matching property in our setting. As a result, the retrained conditional distribution $\bp^{(r+1)}_j$ is essentially the empirical token frequency vector computed from the round-$r$ mixture data in context $j$.

\textbf{Concentration metric.} To measure how concentrated the conditional distribution becomes over rounds, we use the squared $\ell_2$ index
\begin{align}
    \|\bp_j^{(r)}\|^2_2 = \sum_{k=1}^s \left(p^{(r)}_{jk}\right)^2. \label{eq:del_S}
\end{align}
This index is intuitive because $\sum_k (p^{(r)}_{jk})^2$ is the probability that two independent draws from $\bp^{(r)}_j$ produce the same outcome. We define $S^{(r)}_j := \mathbb{E}[\|\bp_j^{(r)}\|^2_2]$. The larger $S^{(r)}_j$ is, the more concentrated or less diverse the generation output is. If $\bp^{(r)}_j$ is uniform over $s$ alternatives, then $S^{(r)}_j=1/s$, corresponding to maximal diversity. By contrast, if $\bp^{(r)}_j$ is nearly deterministic, then $S^{(r)}_j \rightarrow 1$, corresponding to maximal concentration. Next, we present our main findings on how $S^{(r)}_j$ evolves as $r$ increases in the following Proposition \ref{prop:del_bias} (proof found in Appendix \ref{sec:pf_del}).

\begin{proposition}[Retraining on delegated data increases concentration]\label{prop:del_bias}
For each context $j$, if $S^{(0)}_j <1$, then $S_j^{(r)}$ increases in $r$. That is, when a fraction of the data is generated synthetically by the deployed model itself, concentration increases over repeated retraining.
\end{proposition}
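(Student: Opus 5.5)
The plan is to reduce the claim to a scalar affine recursion for $S_j^{(r)}$. I will first pin down one consistent indexing. Using the frequency-matching property of Appendix \ref{pf:del_empirical}, I write the retrained distribution as
\begin{align}
\bp_j^{(r+1)} = (1-\alpha)\,\widehat{\bo}^{(r)} + \alpha\,\widehat{\bq}^{(r)}, \nonumber
\end{align}
where $\widehat{\bo}^{(r)}$ is the empirical histogram of $N_j$ fresh i.i.d.\ draws from $\bp_j^{(0)}$, and $\widehat{\bq}^{(r)}$ is the empirical histogram of $\widehat N_j$ i.i.d.\ draws from the currently deployed $\bp_j^{(r)}$. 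The two histograms are conditionally independent given $\bp_j^{(r)}$. In this convention the first mixture round uses $\bp_j^{(0)}$ as the deployed model, so the iteration starts exactly at $S_j^{(0)}=\|\bp_j^{(0)}\|_2^2=:S_0$, which is deterministic. If the paper's ``$r\ge2$'' convention has a purely organic first round, I will check separately that $S^{(1)}$ lies below the fixed point identified below.

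The first ingredient is the standard multinomial identity: for the empirical histogram $\widehat{\bp}$ of $n$ draws from $\bp$, $\mathbb E[\widehat p_k^2]=p_k^2+p_k(1-p_k)/n$. Hence
\begin{align}
\mathbb E\|\widehat{\bp}\|_2^2=(1-1/n)\|\bp\|_2^2+1/n. \nonumber
\end{align}
The second ingredient is unbiasedness, $\mathbb E[\widehat{\bp}]=\bp$. By induction this gives $\mathbb E[\bp_j^{(r)}]=\bp_j^{(0)}$ for every $r$.

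Next I expand the square and condition on $\bp_j^{(r)}$:
\begin{align}
\mathbb E\big[\|\bp_j^{(r+1)}\|_2^2 \,\big|\, \bp_j^{(r)}\big] = (1-\alpha)^2\Big[(1-\tfrac1{N_j})S_0+\tfrac1{N_j}\Big] + \alpha^2\Big[(1-\tfrac1{\widehat N_j})\|\bp_j^{(r)}\|_2^2+\tfrac1{\widehat N_j}\Big] + 2\alpha(1-\alpha)\langle \bp_j^{(0)},\bp_j^{(r)}\rangle. \nonumber
\end{align}
Taking full expectations, the cross term becomes $\langle \bp_j^{(0)},\mathbb E\bp_j^{(r)}\rangle=S_0$. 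This yields $S_j^{(r+1)}=f(S_j^{(r)})$ with
\begin{align}
f(x)=a+bx, \qquad b=\alpha^2\big(1-1/\widehat N_j\big)\in(0,1), \nonumber
\end{align}
and a constant $a$ depending only on $S_0,\alpha,N_j,\widehat N_j$.

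The key computation is $f(S_0)-S_0$. The $S_0$ coefficients sum to $(1-\alpha)^2+2\alpha(1-\alpha)+\alpha^2-1=0$, which should leave
\begin{align}
f(S_0)-S_0=(1-S_0)\Big[\frac{(1-\alpha)^2}{N_j}+\frac{\alpha^2}{\widehat N_j}\Big]. \nonumber
\end{align}
This is strictly positive exactly when $S_0<1$, which is where the hypothesis enters. Since $f$ is an increasing affine map with slope $b<1$, it has a unique fixed point $S^\ast=a/(1-b)$. The inequality $f(S_0)>S_0$ means $S_0<S^\ast$. For any $x<S^\ast$ we then have $x<f(x)<S^\ast$. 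By induction $S_j^{(r)}$ is strictly increasing in $r$, and it converges to $S^\ast$.

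The expected obstacle is bookkeeping rather than analysis. I need the cross term handled correctly, which requires the independence of the organic and synthetic samples together with the unbiasedness induction. I also need the round-indexing convention to match the paper's description. If the first round is purely organic, I will verify directly that $S_0+(1-S_0)/N_j\le S^\ast$. If that inequality fails for some parameters, I will adopt the uniform-mixture convention above, noting that it is the natural reading of ``retraining on delegated data.''
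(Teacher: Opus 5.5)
Your affine-recursion and fixed-point argument is correct as algebra. It is also essentially the paper's route: the paper derives the same linear recursion, with the same slope $\beta=\alpha^2(1-1/\widehat N)$, and solves it in closed form as $S^{(r+1)}=C-D\beta^r$ with $D>0$.

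The gap is in the model you feed into it. In the paper, round $1$ contains no synthetic data. So $\bp^{(1)}$ is the histogram of the $N$ organic draws alone, and $S^{(1)}=\frac1N+(1-\frac1N)S^{(0)}$. The proof also reuses the \emph{same} organic sample $\{\bz^{(0)}_n\}_{n=1}^{N}$ in every round. That is why its cross term is $\mathbb{E}[(\bz^{(0)}_1)_k\,p^{(r-1)}_k]=\bigl(p^{(0)}_k+(N-1)(p^{(0)}_k)^2\bigr)/N$, which sums over $k$ to $S^{(1)}$, not to $S^{(0)}$ as in your fresh-sample computation. With your fresh-sample cross term and a purely organic first round, the check you plan fails for every choice of parameters. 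Counting coincident pairs gives
\[
S^{(2)}-S^{(1)}=-\bigl(1-S^{(0)}\bigr)\frac{\widehat N(N+1)}{N(N+\widehat N)^2}<0,
\]
so $S^{(1)}$ lies strictly above your fixed point, and the sequence rises once and then decreases. Your fallback of drawing synthetic data from $\bp^{(0)}$ already in round $1$ therefore does not finish the proof. It replaces the stated setup (synthetic data only for $r\ge 2$) with a different one. The main text's word ``fresh'' is in tension with the appendix computation: with literally fresh organic draws and a purely organic first round, the claim is false, which is exactly what your check would uncover.

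The missing ingredient is the shared organic sample, and once you use it your own machinery closes at once. The slope is unchanged, the constant becomes $a'=(1-\alpha^2)S^{(1)}+\alpha^2/\widehat N$, and
\[
f\bigl(S^{(1)}\bigr)-S^{(1)}=\frac{\alpha^2}{\widehat N}\bigl(1-S^{(1)}\bigr)=\frac{\alpha^2}{\widehat N}\Bigl(1-\frac1N\Bigr)\bigl(1-S^{(0)}\bigr)>0
\]
for $N>1$ and $S^{(0)}<1$, while $S^{(0)}<S^{(1)}$ is immediate. Equivalently, condition on the organic sample. From $r=1$ on, the paper's process is then your recursion with $\bp^{(1)}$ in the role of $\bp^{(0)}$ and the $(1-\alpha)^2/N$ contribution removed from your formula for $f(S_0)-S_0$. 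Averaging over the organic sample gives the result. This is also where the paper's assumption $N>1$ is used, which your version never needed.
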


\textbf{Agentic overreliance.} Proposition \ref{prop:del_bias} highlights a closed-loop risk in generative recommenders: when users increasingly delegate decisions to AI agents (e.g., autoplay in
streaming or one-click purchase in shopping), the platform’s interaction logs cease to be a passive record of latent demand and instead become an endogenous byproduct of the agent’s own response. These feedback risks are particularly salient now because agentic AI systems reduce the cognitive cost of search and are increasingly gaining user trust in defaults. As the agent becomes more capable, users rely on it more heavily. Yet greater reliance also increases the fraction of agent-shaped interactions that enter the data pipeline.

Because cross-entropy training fits the conditional outcome distribution observed in the logs (Lemma \ref{lem:del_eq2}), retraining on agent-shaped or synthetic data causes the next model to reproduce and amplify the same patterns.  Over time, this shifts probability mass toward a narrow ``style” region of the catalog, thereby suppressing the discovery of alternatives. Proposition \ref{prop:del_bias} therefore shows that the long-run diversity of recommendations is governed by the share of user-organic preference signal relative to agent-delegated interactions. Systems optimized for short-run engagement through heavy automation can degrade the informational content of their data pipelines, producing increasingly homogeneous recommendations and greater vulnerability to shocks (e.g., trend shifts, new entrants, inventory changes) despite a large underlying catalog. The same logic echoes emerging concerns about ``model collapse'' when LLM-style AI agents (beyond recommendation) are repeatedly trained on model-shaped outputs \citep{alemohammad2023self, shumailov2024ai}.

\section{Conclusion}\label{sec:conclusion}

LLM-style and transformer-based agentic AI systems are being deployed rapidly as shopping assistants, streaming copilots, and content navigators. This shift has raised growing concerns, both in research and in practice, regarding the reliability of such agents and the potential for systematic distortions at scale. In this paper, we take a step toward formalizing these concerns through a mechanism-level theoretical analysis of transformer-based generative recommenders. We identify four bias channels that are central in the emerging literature. (i) \emph{Positional bias}: stronger relative positional encodings tilt personalization toward more recent history, trading off responsiveness against stability. (ii) \emph{Popularity amplification}: frequency imbalance in training data can translate into super-linear exposure imbalance through the dot-product-softmax mechanism. (iii) \emph{Latent-driver bias}: when user actions are partly driven by unobserved factors, softmax can amplify small perturbations into large swings in attention. (iv) \emph{Synthetic-data bias}: as users increasingly delegate decisions and platforms retrain on model-shaped synthetic logs, closed-loop feedback can concentrate outputs and steer the system toward a narrower portion of the catalog.

Although our formal analysis is developed in a generative recommender setting, we view this setting as a tractable lens on a broader class of agentic AI systems. Many transformer-based and LLM-style agents on digital platforms operate by repeatedly conditioning on an evolving context and allocating attention across past signals, not only for ID-based systems, but also for language-based or image-based agents. In such systems, the same mechanism-level risks studied here can arise in closely related forms. Position-dependent weighting can distort which evidence is used, consistent with empirical findings on long-context LLMs and agentic shopping systems \citep{liu2024lost, xiao2023efficient, allouah2025your}. Frequency imbalances can be amplified into concentration and homogenization, echoing recent concerns about LLM-style monoculture beyond recommendation \citep{wu2025generative}. Incomplete or noisy
context can induce brittle attention patterns, consistent with emerging concerns about reliability under latent or missing drivers \citep{zhai2023stabilizing}. And once model-shaped outputs begin to enter the future training pipeline, closed-loop retraining can narrow variety over time, in line with the general model-collapse literature \citep{alemohammad2023self, shumailov2024ai}.
We do not claim that our recommender model fully describes all such deployments. Rather, we hope it provides a useful analytical foundation for studying attention-based context weighting and the reliability of endogenous feedback.

A key direction for future work is to connect these mechanism-level bias channels to downstream operational outcomes and to develop governance and control policies that mitigate bias risks. Recent OM/OR work already cautions that as AI systems gain greater autonomy, they should be paired with stronger structure, monitoring, constraints, and escalation logic rather than looser oversight \citep{cohen2026om, dai2025assured}. Moreover, several papers show that good empirical performance can coexist with important fragilities: pretrained transformers for sequential decisions face out-of-distribution and performative-feedback issues \citep{wang2024understanding}; LLM-based newsvendor studies document persistent deviations and bias amplification even when the optimal formula is available \citep{liu2025large, wang2025human}; and synthetic-data approaches in market research rely on the human-AI response gap being sufficiently limited, stable, or estimable to permit debiasing or optimal mixing \citep{wang2024large, yin2026synthetic}. Therefore, high empirical performance alone should not be taken as evidence of reliability, since systems that perform well on average may still be vulnerable to biases over time. Existing work also suggests mitigation directions, such as using LLMs as interfaces that clarify goals while leaving exact computation to OR solvers \citep{duan2025ask, huang2025orlm}, preserving meaningful human oversight and override rather than full automation \citep{wang2025human, baek2026ai}, and treating synthetic data as a calibrated complement rather than a direct substitute for human behavior \citep{zhang2026llm,yin2026synthetic}. Moreover, an important agenda for OM is not only to deploy increasingly powerful AI systems for revenue alone, but also to understand how to use them to achieve greater operational and societal benefits such as customer utility \citep{chen2026utility}. These require developing robust control, monitoring, and intervention mechanisms, whether through human override and escalation or through mechanism-level changes in training, inference, and feedback control.

\bibliographystyle{informs2014} 
\bibliography{ref}

\ECSwitch

\ECHead{\centering Online Appendix to ``Transformer biases"}

\section{Positional Bias: Auxiliary Results and Proofs}

\subsection{Proof of Proposition \ref{prop:position}}\label{pf:prop_position}

By definition,
\begin{align}
    M_S(\alpha) = \sum_{i \in S} A_{l,i}(\alpha), \nonumber
\end{align}
is a function of positional strength $\alpha$, indicating how strong the role relative PE plays when it weighs more. We denote $d_i = b(|\widetilde{l}-i|)$. Differentiating gives
\begin{align}
    \frac{\mathrm{d}M_S(\alpha)}{\mathrm{d}\alpha}
    = & \sum_{i \in S} \sum_{j=1}^T \frac{\partial A_{l,i}}{\partial Z_j} \frac{\partial Z_j(\alpha)}{\partial \alpha} \nonumber \\
    = & \sum_{i \in S} \sum_{j=1}^T A_{l,i}(\alpha)(\delta_{i,j} - A_{l,j}(\alpha)) d_j \label{pf:prop_position_1}\\
    = & \sum_{i \in S} A_{l,i}(\alpha) d_i
     - A_{l,i}(\alpha) \sum^T_{j=1} A_{l,j}(\alpha) d_j \nonumber \\
    = & \sum_{i \in S} A_{l,i}(\alpha)(d_i - \bar{d}(\alpha)), \label{pf:prop_position_2}
\end{align}
where $\delta_{i,j}$ in \eqref{pf:prop_position_1} is the Kronecker delta, and $\bar{d}(\alpha) := \sum^T_{j=1} A_{l,j}(\alpha) d_j$ in \eqref{pf:prop_position_2}.

Recall the index set $\{1, \dots, T\}$ is partitioned into the ``near'' group $S$ and
``far'' group $F := \{1, \dots, T\} \setminus S$. Define
\begin{align}
    \alpha_S(\alpha) &:= \sum_{i \in S} A_{l,i}(\alpha) \in(0,1), \quad \alpha_F(\alpha) := \sum_{i \in F} A_{l,i}(\alpha) = 1-\alpha_S(\alpha), \nonumber \\
    \mu_S(\alpha) &:= \frac{1}{\alpha_S(\alpha)} \sum_{i \in S} A_{l,i}(\alpha) d_i, \quad \mu_F(\alpha) := \frac{1}{\alpha_F(\alpha)} \sum_{i \in F} A_{l,i}(\alpha) d_i. \nonumber
\end{align}
We can rewrite the average bias as
\begin{align}
    \bar d(\alpha) = \sum^T_{j=1} A_{l,j}(\alpha) d_j = \alpha_S(\alpha)\mu_S(\alpha) + \alpha_F(\alpha) \mu_F(\alpha).
\end{align}
Substituting this into \eqref{pf:prop_position_2}, we obtain
\begin{align}
    \mathrm{d}M_S(\alpha)/\mathrm{d}\alpha = & \sum_{i \in S} A_{l,i}(\alpha) d_i - \sum_{i \in S} A_{l,i}(\alpha) \bar{d}(\alpha) \nonumber \\
    = & \alpha_S(\alpha)\mu_S(\alpha) - \alpha_S(\alpha) (\alpha_S(\alpha) \mu_S(\alpha) + \alpha_F(\alpha) \mu_F(\alpha)) \nonumber\\
    = & \alpha_S(\alpha) \alpha_F(\alpha) (\mu_S(\alpha) - \mu_F(\alpha)). \label{pf:prop_position_3}
\end{align}
Thus the sign of $\mathrm{d}M_S(\alpha)/\mathrm{d}\alpha$ is determined by $\mu_S(\alpha) - \mu_F(\alpha)$, since $\alpha_S(\alpha),\alpha_F(\alpha) \geq 0$. By Assumption \ref{asp:RPE_monotone}, for any $i \in S$ and $j \in F$, we have $d_i \geq d_j$. Therefore $\min_{i \in S} d_i \geq \max_{j \in F} d_j$. Hence, 
\begin{align}
  \mu_S(\alpha) = \frac{1}{\alpha_S(\alpha)} \sum_{i \in S} A_{l,i}(\alpha) d_i \geq \min_{i \in S} d_i \geq \max_{j \in F} d_j \geq \frac{1}{\alpha_F(\alpha)} \sum_{j \in F} A_{l,j}(\alpha) d_j = \mu_F(\alpha). \nonumber
\end{align}
Thus, $\mu_S(\alpha) - \mu_F(\alpha) \geq 0$, and $\mathrm{d}M_S(\alpha)/\mathrm{d}\alpha \geq 0$ in this case. 
$\hfill\square$

\subsection{Position patterns of RoPE}\label{pf:RoPE}

\textbf{Rotary positional encoding (RoPE).} Let the cross-attention head dimension be even, $d^{\text{cross}} = 2R$, so that $\bq_l,\bk_j\in\mathbb R^{2R}$. For $r=1,\dots,R$, let $\omega_r>0$ denote the frequency of the $r$-th two-dimensional rotary block, and define the $2\times 2$ rotation matrix
\begin{align}
    R(\varphi) :=
    \begin{bmatrix}
        \cos \varphi & -\sin \varphi\\
        \sin \varphi & \cos \varphi
    \end{bmatrix}.
\end{align}
The block-diagonal rotary operator at position $t$ is $$\mathcal R_t(\theta) := \text{diag}(R(\omega_1\theta t), R(\omega_2\theta t), \ldots, R(\omega_R\theta t)),$$ where $\theta \geq 0$ controls the overall rotary scale, and $0< \omega_1 \leq \cdots \leq \omega_R$ are fixed band multipliers. The RoPE-modified attention logit from decoder position $l$ to history position $j$ is
\begin{align}
    Z^{\text{RoPE}}_{l,j}(\theta) := \frac{(\mathcal R_{\widetilde{l}}(\theta)\bq_l)^\top(\mathcal R_j(\theta)\bk_j)}{\sqrt{d^{\text{cross}}}} = \frac{\bq_l^\top \mathcal R_{\widetilde{l}}(\theta)^\top \mathcal R_j(\theta)\bk_j}{\sqrt{d^{\text{cross}}}}. \label{eq:RoPE_Z}
\end{align}
Thus, RoPE injects positional information by rotating each two-dimensional query/key block by an angle proportional to position, so that the resulting logit depends on the relative phase difference between positions.

\textbf{Reduction to cosine form.} Let $d_j := \widetilde{l} - j$ denote the relative distance from history position $j$ to the current decoder position. Using the orthogonality and group property of planar rotations, we have
\begin{align}
    \mathcal{R}_{\widetilde{l}}(\theta)^\top \mathcal{R}_j(\theta) = \mathcal{R}_{j-\widetilde{l}}(\theta). \nonumber
\end{align}
Therefore, from \eqref{eq:RoPE_Z},
\begin{align}
    Z^{\text{RoPE}}_{l,j}(\theta) =
    \frac{\bq_l^\top \mathcal{R}_{\widetilde{l}}(\theta)^\top \mathcal{R}_j(\theta) \bk_j}{\sqrt{d^{\text{cross}}}} = \frac{\bq_l^\top \mathcal{R}_{j-\widetilde{l}}(\theta) \bk_j}{\sqrt{d^{\text{cross}}}}.
\end{align}
Write the query and key in $R = d^{\mathrm{cross}}/2$ two-dimensional blocks:
\begin{align}
    \bq_l^{(r)}
    :=
    \begin{bmatrix}
         q_{l,2r-1}\\
         q_{l,2r}
    \end{bmatrix},
    \quad
    \bk_j^{(r)}
    :=
    \begin{bmatrix}
         k_{j,2r-1}\\
         k_{j,2r}
    \end{bmatrix},
    \quad r=1, \ldots, R. \nonumber
\end{align}
For each block $r$, define
\begin{align}
    a^{(r)}_{l,j} := \frac{\langle \bq_l^{(r)}, \bk_j^{(r)}\rangle}{\sqrt{d^{\text{cross}}}}, \quad  b^{(r)}_{l,j} := \frac{q_{l,2r-1} k_{j,2r} -q_{l,2r} k_{j,2r-1}}{\sqrt{d^{\text{cross}}}}, \nonumber
\end{align}
and let
\begin{align}
    \kappa^{(r)}_{l,j} := \sqrt{\left(a^{(r)}_{l,j}\right)^2 + \left(b^{(r)}_{l,j}\right)^2}, \quad \psi^{(r)}_{l,j} := \operatorname{atan2}\left(b^{(r)}_{l,j}, a^{(r)}_{l,j}\right). \nonumber
\end{align}
Then the $r$-th two-dimensional block contributes
\begin{align}
    \frac{\bq_l^\top \mathcal{R}_{j-\widetilde{l}}(\theta) \bk_j}{\sqrt{d^{\text{cross}}}} = & \frac{\left(\bq_l^{(r)}\right)^\top R\left(\omega_r \theta(j- \widetilde{l})\right) \bk_j^{(r)}}{\sqrt{d^{\mathrm{cross}}}} \nonumber \\
    = & a^{(r)}_{l,j} \cos(\omega_r \theta d_j) + b^{(r)}_{l,j} \sin(\omega_r\theta d_j) \nonumber\\
    = & \kappa^{(r)}_{l,j}\cos\left(\omega_r\theta d_j-\psi^{(r)}_{l,j}\right). \nonumber
\end{align}
Summing over all $R$ rotary blocks yields the RoPE representation
\begin{align}
    Z^{\text{RoPE}}_{l,j}(\theta) = \sum_{r=1}^{R} \kappa^{(r)}_{l,j} \cos\left(\omega_r\theta d_j -\psi^{(r)}_{l,j}\right). \label{eq:RoPE_Z_decomp}
\end{align}
The attention weight on the history token $j$ is
\begin{align}
    A^{\text{RoPE}}_{l,j}(\theta) := \frac{\exp\left(Z^{\text{RoPE}}_{l,j}(\theta)\right)}{\sum_{j'=1}^T \exp\left(Z^{\text{RoPE}}_{l,j'}(\theta)\right)}. \nonumber
\end{align}

\textbf{General-form positional bias.}
Without further restrictions, which position is favored under RoPE depends not only on
relative distance, but also on the block-specific amplitudes $\kappa^{(r)}_{l,j}$ and phases $\psi^{(r)}_{l,j}$. To isolate this
effect, fix one decoder cross-attention row $l$ and consider two history positions
$i,j \in [T]$ with identical key vectors, $\bk_i=\bk_j$, but different relative distances to the current decoder position, i.e., $d_i\neq d_j$. Then for each rotary block $r$,
\begin{align}
    \kappa^{(r)}_{l,i} = \kappa^{(r)}_{l,j} =: \kappa_r, \quad \psi^{(r)}_{l,i} = \psi^{(r)}_{l,j} =: \psi_r. \nonumber
\end{align}
Therefore,
\begin{align}
    Z^{\mathrm{RoPE}}_{l,i}(\theta) - Z^{\mathrm{RoPE}}_{l,j}(\theta) = \sum_{r=1}^{R} \kappa_r \left[\cos(\omega_r \theta d_i - \psi_r)-\cos(\omega_r \theta d_j - \psi_r) \right]. \nonumber
\end{align}
This difference need not be zero even though the two tokens are identical in content. However, its sign is not determined by distance alone, since it depends on the phase-adjusted oscillatory terms across rotary blocks. Thus, in its most general form, RoPE can induce positional bias, but that bias need not take the form of a specific recency or primacy preference.

\textbf{Small-$\theta$ regime.} The exact RoPE relevance score $Z^{\mathrm{RoPE}}_{l,j}(\theta)$ entangles the relative-distance term
$\omega_r\theta d_j$ with content-specific amplitudes $\kappa^{(r)}_{l,j}$ and phases
$\psi^{(r)}_{l,j}$, which makes it difficult to isolate a clean positional factor. To study the positional effect itself, we therefore consider the reduced-form coherent-band
approximation
\begin{align}
    \widetilde Z^{\text{RoPE}}_{l,j}(\theta) := \sum_{r=1}^{R} \kappa^{(r)}_{l} \cos(\omega_r\theta d_j), \quad j\in[T], \nonumber
\end{align}
where $\kappa^{(r)}_{l} \geq 0$ and $d_j:=\widetilde{l} - j$. This should be viewed as a tractable, position-isolating approximation rather than a claim about the most general trained RoPE model. As in the RPE case, fix a near set $S \subseteq [T]$, let $F:=[T] \setminus S$ where
\begin{align}
    d_i \leq d_j \quad \text{for all } i\in S,\ j \in F. \label{eq:RoPE_near_far}
\end{align}
Define
\begin{align}
    M_S(\theta) := \sum_{i\in S}\widetilde A^{\text{RoPE}}_{l,i}(\theta), \nonumber\
\end{align}
where $\widetilde A^{\text{RoPE}}_{l,i}(\theta)$ is the softmax weight induced by the logits $\widetilde Z^{\text{RoPE}}_{l,j}(\theta)$.

\begin{lemma}[Rotation angle tilts attention toward near history] \label{lem:RoPE_small}
Define $d_{\max}:=\max_{j\in[T]} d_j$ and  $\omega_{\max}:=\max_{1\le r\le R}\omega_r$. Suppose
\begin{align}
     \theta \le \frac{\pi}{2 \omega_{\max} d_{\max}}. \label{eq:RoPE_small_condition}
\end{align}
Then
\begin{align}
    \frac{d}{d\theta} M_S(\theta) \geq 0.
\end{align}
That is, within this coherent-band small-angle regime, increasing the rotary scale $\theta$
shifts attention mass toward the near-history window.
\end{lemma}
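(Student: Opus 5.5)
The plan is to repeat the argument used for Proposition \ref{prop:position}, with the fixed positional increments $d_i$ of the RPE case replaced by the $\theta$-derivatives of the RoPE logits. Write $g_j(\theta) := \partial \widetilde Z^{\text{RoPE}}_{l,j}(\theta)/\partial\theta$. Differentiating the coherent-band logit gives
\begin{align}
    g_j(\theta) = -\sum_{r=1}^R \kappa^{(r)}_l\,\omega_r d_j \sin(\omega_r\theta d_j). \nonumber
\end{align}
Applying the softmax Jacobian $\partial A_i/\partial Z_j = A_i(\delta_{ij}-A_j)$ exactly as in \eqref{pf:prop_position_1}--\eqref{pf:prop_position_2} yields
\begin{align}
    \frac{d}{d\theta}M_S(\theta)=\sum_{i\in S}\widetilde A_{l,i}(\theta)\bigl(g_i(\theta)-\bar g(\theta)\bigr), \nonumber
\end{align}
where $\bar g(\theta)=\sum_j \widetilde A_{l,j}(\theta) g_j(\theta)$. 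The same near/far regrouping as in \eqref{pf:prop_position_3} then gives
\begin{align}
    \frac{d}{d\theta}M_S(\theta)=\alpha_S\alpha_F(\mu_S-\mu_F). \nonumber
\end{align}
Here $\alpha_S,\alpha_F$ are the attention masses on $S$ and $F$, and $\mu_S,\mu_F$ are the attention-weighted averages of $g$ over $S$ and over $F$. Since $\alpha_S,\alpha_F\ge 0$, it suffices to show $g_i(\theta)\ge g_j(\theta)$ for all $i\in S$ and $j\in F$. The min/max sandwich from the proof of Proposition \ref{prop:position} then gives $\mu_S\ge\mu_F$.

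The pointwise comparison will follow from a monotonicity property of each band. First, $d_j=\widetilde l-j=l+T-j\ge l\ge 1>0$, so all relative distances are positive. Fix a band $r$ and set $c_r:=\omega_r\theta\ge 0$. Consider $f_r(x):=x\sin(c_r x)$ on $[0,d_{\max}]$. Its derivative is
\begin{align}
    f_r'(x)=\sin(c_r x)+c_r x\cos(c_r x). \nonumber
\end{align}
By \eqref{eq:RoPE_small_condition}, $c_r x\le \omega_{\max}\theta d_{\max}\le\pi/2$, so both terms are nonnegative. Hence $f_r$ is nondecreasing on the relevant range. Because $\kappa^{(r)}_l\ge 0$ and $\omega_r>0$, the function $g_j=-\sum_r\kappa^{(r)}_l\,\omega_r f_r(d_j)$ is nonincreasing in $d_j$. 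The near/far ordering \eqref{eq:RoPE_near_far}, $d_i\le d_j$, then gives $g_i\ge g_j$ for all $i\in S$ and $j\in F$, which completes the argument. The boundary case $\theta=0$ is trivial, since then every $g_j=0$.

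The main obstacle is the monotonicity step. The bound \eqref{eq:RoPE_small_condition} must keep every phase $\omega_r\theta d_j$ inside $[0,\pi/2]$, where both $\sin$ and $\cos$ are nonnegative. The sign conventions also need checking: $d_j>0$ and $\kappa^{(r)}_l\ge0$. Outside this regime $x\sin(cx)$ can decrease, and the tilt toward near history can reverse. This matches the oscillatory behaviour noted in the general-form RoPE discussion.
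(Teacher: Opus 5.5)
Your proposal is correct and follows the paper's proof essentially step by step. It uses the same softmax-Jacobian differentiation and the same regrouping into $\alpha_S\alpha_F(\mu_S^g-\mu_F^g)$. It then gets the pointwise comparison $g_i\ge g_j$ from the same fact that $x\sin(\omega_r\theta x)$ is nondecreasing on $[0,d_{\max}]$ under the small-angle condition. Your explicit check that $d_j=l+T-j\ge 1>0$ is a small point the paper leaves implicit.
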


\begin{proof}{Proof of Lemma \ref{lem:RoPE_small}}
We omit the superscript RoPE in the following proof for notational convenience. Similar with the RPE case, differentiating $M_S(\theta)$ gives
\begin{align}
    \frac{d}{d\theta}M_S(\theta) & = \sum_{i\in S} \sum_{j=1}^T \frac{\partial \widetilde{A}_{l,i}}{\partial \widetilde{Z}_{l,j}} \frac{\partial \widetilde{Z}_{l,j}(\theta)}{\partial \theta} = \sum_{i\in S} \sum_{j=1}^T
    \widetilde{A}_{l,i}(\theta) \left(\delta_{i,j}-\widetilde{A}_{l,j}(\theta)\right) g_j(\theta), \nonumber
\end{align}
where
\begin{align}
    g_j(\theta) := \frac{\partial \widetilde{Z}_{l,j}(\theta)}{\partial \theta} = -\sum_{r=1}^{R} \kappa^{(r)}_{l} \omega_r d_j \sin(\omega_r\theta d_j).
    \nonumber
\end{align}
Thus,
\begin{align}
    \frac{d}{d\theta}M_S(\theta) & = \sum_{i\in S} \widetilde{A}_{l,i}(\theta) g_i(\theta) - \sum_{i\in S} \widetilde{A}_{l,i}(\theta) \sum_{j=1}^T \widetilde{A}_{l,j}(\theta) g_j(\theta) \nonumber \\
    & = \sum_{i\in S}\widetilde{A}_{l,i}(\theta) \left(g_i(\theta)-\bar{g}(\theta)\right),
    \label{pf:rope_small_1}
\end{align}
where
\begin{align}
    \bar{g}(\theta) := \sum_{j=1}^T \widetilde{A}_{l,j}(\theta) g_j(\theta). \nonumber
\end{align}
Define
\begin{align}
    \alpha_S(\theta) &:= \sum_{i\in S} \widetilde{A}_{l,i}(\theta) \in (0,1), \quad \alpha_F(\theta) := 1 - \alpha_S(\theta), \nonumber \\
    \mu_S^g(\theta) &:= \frac{1}{\alpha_S(\theta)} \sum_{i\in S} \widetilde{A}_{l,i}(\theta) g_i(\theta), \quad \mu_F^g(\theta) := \frac{1}{\alpha_F(\theta)} \sum_{j\in F} \widetilde{A}_{l,j}(\theta) g_j(\theta). \nonumber
\end{align}
Then
\begin{align}
    \bar{g}(\theta) = \alpha_S(\theta) \mu_S^g(\theta) + \alpha_F(\theta) \mu_F^g(\theta), \nonumber
\end{align}
and substituting this into \eqref{pf:rope_small_1} yields
\begin{align}
    \frac{d}{d\theta} M_S(\theta) = \alpha_S(\theta) \alpha_F(\theta) \left(\mu_S^g(\theta)-\mu_F^g(\theta)\right). \label{pf:rope_small_2}
\end{align}
Hence the sign of $\frac{d}{d\theta}M_S(\theta)$ is determined by $\mu_S^g(\theta) -\mu_F^g(\theta)$.

It remains to show that $g_i(\theta)\ge g_j(\theta)$ whenever $i\in S$ and $j\in F$. Under \eqref{eq:RoPE_small_condition}, for every $r$ and every $d\in[0,d_{\max}]$,
\begin{align}
    0 \leq \omega_r \theta d \le \omega_{\max}\theta d_{\max}\le \frac{\pi}{2}, \nonumber
\end{align}
so both $\sin(\omega_r\theta d)$ and $\cos(\omega_r\theta d)$ are nonnegative. Therefore, for each $r=1, \ldots, R$,
\begin{align}
    \frac{\partial}{\partial d} \left(d \sin(\omega_r\theta d)\right) = \sin(\omega_r\theta d) + \omega_r\theta d\cos(\omega_r\theta d) \geq 0, \nonumber
\end{align}
which is nondecreasing in $d$ on $[0,d_{\max}]$. Since for $i\in S$ and $j\in F$, $d_i\le d_j$,
\begin{align}
    d_i \sin(\omega_r\theta d_i) \leq d_j \sin(\omega_r\theta d_j)
\end{align}
for all $r = 1, \ldots, R$. Multiplying by $-\kappa^{(r)}_l\omega_r\le 0$ and summing over $r$ yields $g_i(\theta) \geq g_j(\theta)$. Therefore $\mu_S^g(\theta)\ge \mu_F^g(\theta)$ and the claim is proved.
$\hfill\square$
\end{proof}

\begin{remark}[Relation to the standard RoPE parameterization] 
Under the standard RoPE parameterization, the $r$-th two-dimensional block rotates at per-position angular frequency $\theta_r^{\text{std}} = 10000^{-2(r-1)/d^{\text{cross}}}$ for all $r=1,\ldots,R$ up to minor indexing conventions. In our notation, the angle of block $r$ at position $t$ is $\omega_r \theta t$, so the exact correspondence is $\omega_r\theta = \theta_r^{\mathrm{std}} = 10000^{-2(r-1)/d^{\mathrm{cross}}}$. Hence the relevant relative phase difference at history distance $d_j$ is $(\omega_r\theta)d_j = 10000^{-2(r-1)/d^{\mathrm{cross}}} d_j$. Accordingly, the small-angle condition in Lemma \ref{lem:RoPE_small}, $\theta \leq \pi/(2\omega_{\max}d_{\max})$, is equivalent to requiring that every \emph{active} rotary block satisfy $(\omega_r\theta)d_j \leq \pi/2$. Under the standard parameterization, this is $10000^{-2(r-1)/d^{\mathrm{cross}}} \cdot d_{\max} \leq \pi/2$ for the active band under study. This interpretation is practically meaningful when the row-wise positional effect is dominated by slowly rotating, low-frequency RoPE dimensions. Indeed, prior analysis of RoPE emphasizes that transformers often predominantly utilize slowly rotating feature dimensions \citep{barbero2025round}, and notes that the base rotational angle can be as small as approximately $1/10000$ per token (allowing token length $d_{\max} \approx \pi/(2 \times 10^{-4})$), leading to gradual decay effects rather than rapid oscillation. In that regime, the small-angle condition can hold over a substantial range of distances. Therefore, our assumption should be interpreted as a sufficiently small-angle condition for an active low-frequency band. 
\end{remark}

\section{Popularity bias: Proofs}\label{sec:pf_pop_bias}

We first analyze how the evolving trajectory of $\bmu_h^{(n)}$ depends on token popularity of the training data (forthcoming Lemma \ref{lem:pop_mu}). Then, we evaluate the model bias by feeding in an arbitrary test query $\bq$ (Proposition \ref{prop:pop_bias}).

\subsection{How SGD Trajectory Evolves}

\begin{lemma}\label{lem:pop_mu}
For an early training horizon $N$, fix an SGD step $n \leq N-1$. Define
\begin{align}
    \bw_h := \mathbb{E}\left[\bq^{(n)} \mid  X^{(n)}=h \right], \quad \bar{\bw} := \sum_{h' \in \HHH} p_{h'} \bw_{h'} \nonumber
\end{align}
to be the expected query direction of token $\bh$ and the average query direction for all tokens. Then, the expected one-step SGD update satisfies
\begin{align}
    \mathbb{E}\left[\bmu_h^{(n+1)}-\bmu_h^{(n)}\right] = \frac{\eta p_h}{\sqrt{d}}(\bw_h - \bar{\bw}) - \frac{\eta p_h}{d} \mathbb{E}\left[\bq^{(n)}(\bq^{(n)})^\top\right] \delta_h^{(n)} + O\left(\frac{\eta p_h\epsilon^2}{\sqrt{d}}\E\|\bq^{(n)}\|\right), \nonumber
\end{align}
where $\delta^{(n)}_h = \bmu_h^{(n)} - \sum_{\bh \in \HHH} p_h \bmu_h^{(n)}$ remains small during early training, and $\epsilon$ is a higher order term of $\delta^{(n)}$.
\end{lemma}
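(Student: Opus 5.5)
We compute the gradient of the surrogate loss \eqref{eq:pop_surrogate_loss} explicitly, take expectations conditional on the current parameters, and then linearize the softmax weights around the symmetric point. Write $Z_{h'} := \langle \bq^{(n)}, \bmu^{(n)}_{h'}\rangle/\sqrt d$. The token-level probabilities are
\begin{align}
\pi_{h'}(\bq) := \frac{|S_{h'}| e^{Z_{h'}}}{\sum_{h''}|S_{h''}| e^{Z_{h''}}}.
\end{align}
Then $\ell_X = -Z_X + \log\sum_{h'}|S_{h'}|e^{Z_{h'}} - \log |S_X|$. This gives the standard softmax gradient
\begin{align}
\nabla_{\bmu_h}\ell_X = -\frac{\bq}{\sqrt d}\big(\mathbb{1}\{X=\bh\} - \pi_h(\bq)\big).
\end{align}
The SGD step \eqref{eq:pop_sgd_update} is therefore $\bmu_h^{(n+1)}-\bmu_h^{(n)} = \frac{\eta}{\sqrt d}\bq^{(n)}(\mathbb{1}\{X^{(n)}=\bh\}-\pi_h(\bq^{(n)}))$.

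By Assumption \ref{asp:pop}(ii), $(\bq^{(n)},X^{(n)})$ is independent of $\bmu^{(n)}$, which depends only on earlier samples. Conditioning on $\bmu^{(n)}$, the first term has expectation
\begin{align}
\frac{\eta}{\sqrt d}\mathbb{E}[\bq\mathbb{1}\{X=\bh\}] = \frac{\eta p_h}{\sqrt d}\bw_h.
\end{align}
For the second term, we use the identification $\pi_h \approx p_h$ at the symmetric point. Since $p_h = |S_h|/T$ by \eqref{eq:pop_p}, when all $\bmu_{h'}$ coincide we get exactly $\pi_h = p_h$. This is where Assumption \ref{asp:pop}(iii) enters.

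Away from symmetry, we write $\bmu_{h'}^{(n)} = \bar\bmu^{(n)} + \delta^{(n)}_{h'}$ with $\bar\bmu^{(n)} = \sum p_{h'}\bmu^{(n)}_{h'}$. The common shift $\bar\bmu$ cancels in the softmax, so $\pi_h$ depends only on $x_{h'} := \langle\bq,\delta_{h'}\rangle/\sqrt d$. A first-order Taylor expansion around $x=0$ gives
\begin{align}
\pi_h = p_h + p_h\Big(x_h - \sum_{h'}p_{h'}x_{h'}\Big) + O(p_h\|x\|^2) .
\end{align}
The weighted mean $\sum p_{h'}\delta_{h'}=0$, so the centering term vanishes and $\pi_h = p_h + p_h x_h + O(p_h \|x\|^2)$.

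Substituting, we obtain
\begin{align}
-\frac{\eta}{\sqrt d}\mathbb{E}[\bq\,\pi_h(\bq)] = -\frac{\eta p_h}{\sqrt d}\bar\bw - \frac{\eta p_h}{d}\mathbb{E}[\bq\bq^\top]\delta_h^{(n)} + O\Big(\frac{\eta p_h}{\sqrt d}\mathbb{E}\big[\|\bq\|\,\|x\|^2\big]\Big).
\end{align}
We then take the outer expectation over $\bmu^{(n)}$. Treating $\delta_h^{(n)}$ as deterministic to leading order, or interpreting the middle term with $\delta_h^{(n)}$ inside the expectation, yields the stated display with $\epsilon$ playing the role of $\|x\| \le B_q\max_{h'}\|\delta^{(n)}_{h'}\|/\sqrt d$. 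The bound on $\|x\|$ comes from Assumption \ref{asp:pop}(i).

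The main obstacle is controlling the remainder uniformly. This requires a second-order Taylor bound on the softmax: its Hessian entries are bounded by a constant times $p_h$ times products of $x$. It also requires a bound showing $\delta^{(n)}$ stays small throughout the early window $n\le N-1$. For the latter, we would note that each step moves $\bmu_h$ by at most $\eta B_q/\sqrt d$, so $\|\delta^{(n)}_h\| \le 2\eta n B_q/\sqrt d$. Hence $\epsilon = O(\eta N B_q^2/d)$ is small for an early horizon with $\eta N \ll d/B_q^2$. This justifies treating $\epsilon^2$ as higher order.
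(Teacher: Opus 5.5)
Your proposal is correct and follows essentially the same route as the paper's proof. Both arguments use the explicit gradient $\frac{1}{\sqrt d}\bigl(M_h(\bq)-\mathbf{1}\{X=\bh\}\bigr)\bq$ and recenter the keys at the $p$-weighted mean $\bar{\bmu}^{(n)}$ so the common shift cancels. Both then Taylor-expand $M_h = p_h(1+z_h+O(\epsilon^2))$, where the denominator's first-order term vanishes because $\sum_{h'}p_{h'}\delta^{(n)}_{h'}=0$, and take expectations term by term.

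Your additions refine steps the paper simply asserts; they do not change the argument. These are the log-sum-exp derivation of the gradient, the use of independence of $(\bq^{(n)},X^{(n)})$ from $\bmu^{(n)}$ to handle the random $\delta_h^{(n)}$ under the outer expectation, and the bound $\epsilon=O(\eta N B_q^2/d)$ showing the deviations stay small in the early window.
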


\begin{proof}{Proof of Lemma \ref{lem:pop_mu}}
\textbf{Step 1: Gradient formula.} Fix a token $\bh \in \HHH$, we have
\begin{align}
    \nabla_{\bmu_h}\ell_{\theta}(\bq,X) = \sum^{t \times L}_{j=1} \frac{\partial \ell}{\partial Z_j} \cdot \frac{\partial Z_j}{\partial \bmu_h} = \sum^{t \times L}_{j=1} \sum ^{t \times L}_{i=1} \frac{\partial \ell}{\partial A_i} \cdot \frac{\partial A_i}{\partial Z_j} \cdot \frac{\partial Z_j}{\partial \bmu_h}. \label{pf:pop_l}
\end{align}
We now derive each term in \eqref{pf:pop_l}. Since $M_h(\bq) = \sum_{j:\,\bh_j=\bh} A_j$ (we shorthand $M_h(\bq)$ as $M_h$), we have
\begin{align}
    \frac{\partial \ell}{\partial A_i} = -\frac{1}{M_X} \cdot \frac{\partial M_X}{\partial A_i} = -\frac{1}{M_X} \cdot \mathbf{1}\{i \in S_X\}. \label{pf:pop_l_1}
\end{align}
The softmax Jacobian gives
\begin{align}
    \frac{\partial A_i}{\partial Z_j} = A_i(\mathbf{1}\{i = j\} - A_j). \label{pf:pop_l_2}
\end{align}
We also have
\begin{align}
    \frac{\partial Z_j}{\partial \bmu_h} = \frac{\partial \left(\langle \bq, \bmu_h \rangle/\sqrt{d}\right)}{\partial \bmu_h} = \frac{\bq}{\sqrt{d}} \cdot \mathbf{1}\{j \in S_h\}. \label{pf:pop_l_3}
\end{align}
Putting together \eqref{pf:pop_l_1}, \eqref{pf:pop_l_2} and \eqref{pf:pop_l_3}, we have
\begin{align}
    \nabla_{\bmu_h}\ell = & \sum^{t \times L}_{j=1} \sum ^{t \times L}_{i=1} -\frac{1}{M_X} \cdot \mathbf{1}\{i \in S_X\} \cdot A_i(\mathbf{1}\{i = j\} - A_j) \cdot \frac{\bq}{\sqrt{d}} \cdot \mathbf{1}\{j \in S_h\} \nonumber \\
    = & \sum^{t \times L}_{j=1} -\frac{1}{M_X} \cdot \left(\sum_{i \in S_X} A_i(\mathbf{1}\{i = j\} - A_j)\right) \cdot \frac{\bq}{\sqrt{d}} \cdot \mathbf{1}\{j \in S_h\} \nonumber \\
    = & \sum^{t \times L}_{j=1} -\frac{1}{M_X} \cdot \left(A_j \cdot \mathbf{1}\{j \in S_X\} - A_j \cdot M_X\right) \cdot \frac{\bq}{\sqrt{d}} \cdot \mathbf{1}\{j \in S_h\} \label{pf:pop_l_4} \\
    = & \sum^{t \times L}_{j=1} \left(A_j - \frac{A_j}{M_X} \mathbf{1}\{j \in S_X\}\right) \cdot \frac{\bq}{\sqrt{d}} \cdot \mathbf{1}\{j \in S_h\} \nonumber \\
    = & \frac{\bq}{\sqrt{d}} \left(\sum_{j \in S_h} A_j - \frac{1}{M_X} \sum_{j \in S_h \cap S_X} A_j \right) \nonumber \\
    = & \frac{\bq}{\sqrt{d}} \left(M_h - \frac{M_h}{M_X} \cdot \mathbf{1}\{X = \bh\} \right). \label{pf:pop_l_5}
\end{align}
Equation \eqref{pf:pop_l_4} holds since $\sum_{i \in S_X} A_i \cdot \mathbf{1}\{i = j\} = A_j \cdot \mathbf{1}\{j \in S_X\}$ and $\sum_{i \in S_X} A_i  A_j = M_X A_j$. Equation \eqref{pf:pop_l_5} stands because
\[
\sum_{j \in S_h \cap S_X} A_j =
\begin{cases}
    M_h &\text{if } X = \bh \\
    0 &\text{if } X \neq \bh
\end{cases}
\;= M_h \cdot \mathbf{1}\{X = \bh\}.
\]
Hence,
\begin{align}
    \nabla_{\bmu_h}\ell_{\theta}(\bq,X) = \frac{1}{\sqrt{d}} \left(M_h(\bq) - \mathbf{1}\{X = \bh\} \right) \bq. \label{pf:pop_l_dev}
\end{align}

\textbf{Step 2: SGD early-stage amplification.} By the SGD update rule \eqref{eq:pop_sgd_update} and \eqref{pf:pop_l_dev},
\begin{align}
    \bmu_h^{(n+1)} - \bmu_h^{(n)} =  - \frac{\eta}{\sqrt{d}} \left(M^{(n)}_h(\bq^{(n)}) - \mathbf{1}\{\bh = X^{(n)}\} \right) \bq^{(n)} \label{pf:pop_step}.
\end{align}
We define $\bar{\bmu}^{(n)} := \sum_{\bh \in \HHH} p_h \bmu^{(n)}_h$ and $\delta^{(n)}_h = \bmu^{(n)}_h - \bar{\bmu}^{(n)}$. We write $M_h^{(n)}(\bq; \bmu_h^{(n)})$ (see \eqref{eq:pop_M_n}) as
\begin{align}
    M_h^{(n)}(\bq; \bmu_h^{(n)}) = & \frac{|S_h|\exp\left(\langle \bq, \bmu_h^{(n)}\rangle / \sqrt{d}\right)}{\sum_{\bh' \in \mathcal{H}} |S_{h'}|\exp\left(\langle \bq, \bmu_{h'}^{(n)}\rangle / \sqrt{d}\right)} \nonumber \\
    = & \frac{|S_h|\exp\left(\langle \bq, \bar{\bmu}^{(n)} + \delta^{(n)}_h \rangle / \sqrt{d}\right)}{\sum_{\bh' \in \mathcal{H}} |S_{h'}|\exp\left(\langle \bq, \bar{\bmu}_{h'} + \delta^{(n)}_{h'}\rangle / \sqrt{d}\right)} \nonumber \\
     = & \frac{|S_h|\exp\left(\langle \bq, \delta^{(n)}_h \rangle / \sqrt{d}\right)}{\sum_{\bh' \in \mathcal{H}} |S_{h'}|\exp\left(\langle \bq, \delta^{(n)}_{h'}\rangle / \sqrt{d}\right)}. \nonumber
\end{align}
At the initial of training, $\bmu^{(n)}_h$ are not distinguished much between tokens, and $\delta^{(n)}_h \approx 0$ for all $\bh \in \HHH$. Therefore, for early SGD steps $n = 0, 1, \ldots, N-1$, $\delta^{(n)}_h$ satisfies
\begin{align}
    \Bigg| \frac{\langle \bq, \delta^{(n)}_h \rangle}{\sqrt{d}} \Bigg| \leq \epsilon \quad \forall \bq, \bh,
\end{align}
where $\epsilon$ is a small constant. Shorthanding $z^{(n)}_h(\bq): = \langle \bq, \delta^{(n)}_h \rangle / \sqrt{d}$, we have
\begin{align}
    M_h^{(n)}(\bq; \bmu_h^{(n)}) = & \frac{|S_h| e^{z^{(n)}_h(\bq)}}{\sum_{\bh' \in \mathcal{H}} |S_{h'}|e^{z^{(n)}_{h'}(\bq)}} \nonumber \\
    = & \frac{|S_h|\left(1 + z^{(n)}_h(\bq) + O\left(\epsilon^2\right)\right)}{\sum_{\bh' \in \mathcal{H}} |S_{h'}|\left(1 + z^{(n)}_{h'}(\bq) + O\left(\epsilon^2\right)\right)} \label{pf:pop_Taylor} \\
    = & \frac{p_h \left(1 + z^{(n)}_h(\bq) + O\left( \epsilon^2 \right)\right)}{\sum_{\bh' \in \mathcal{H}} p_{h'} \left(1 + z^{(n)}_{h'}(\bq) + O\left(\epsilon^2\right)\right)} \label{pf:pop_Taylor_1} \\
    = & p_h \left(1 + z^{(n)}_h(\bq) + O\left( \epsilon^2\right)\right). \label{pf:pop_Taylor_2}
\end{align}
Equation \eqref{pf:pop_Taylor} follows from using Taylor expansion around $\delta^{(n)}_h = 0$. Equation \eqref{pf:pop_Taylor_1} holds since $p_h = |S_h|/T$ for all $\bh \in \HHH$. Equation \eqref{pf:pop_Taylor_2} follows because
\begin{align}
    & \sum_{\bh' \in \mathcal{H}} p_{h'} = 1, \nonumber \\
    &\sum_{\bh' \in \HHH} p_{h'} z^{(n)}_{h'}(\bq) = \sum_{\bh' \in \HHH} p_{h'} \frac{\langle \bq, \delta^{(n)}_{h'} \rangle}{\sqrt{d}} = \frac{\langle \bq, \sum_{\bh' \in \HHH} p_{h'} (\bmu^{(n)}_{h'} - \bar{\bmu}^{(n)}) \rangle}{\sqrt{d}} = 0. \nonumber
\end{align}
Hence, the denominator of \eqref{pf:pop_Taylor_1} simplifies to $1 + O(\epsilon^2)$. We define
\begin{align}
    \bw_h := \mathbb{E}\left[\bq^{(n)}\mid X^{(n)}=h\right], \nonumber
\end{align}
which can be interpreted as the direction the query context signal given token $\bh$. We further define
\begin{align}
    \bar{\bw} := \mathbb{E}\left[\bq^{(n)}\right]= \sum_{\bh \in \HHH} \mathbb{E}\left[\bq^{(n)} \Big| \bh = X^{(n)}\right] \Pr(\bh = X^{(n)}) =\sum_{\bh \in \HHH} p_h \bw_h \nonumber
\end{align}
to be the average direction of the query context. The expectations are over the training randomness $(\bq^{(n)},X^{(n)})$. Then,
\begin{align}
    \mathbb{E}\left[\mathbf{1}\{X^{(n)} = \bh\} \bq^{(n)}\right] = \Pr\left(X^{(n)} = \bh\right) \mathbb{E}\left[\bq^{(n)} \big| \bh = X^{(n)}\right] = p_h \bw_h, \label{pf:pop_exp_1}
\end{align}
and
\begin{align}
    \mathbb{E}\left[M^{(n)}_h(\bq^{(n)}) \bq^{(n)} \right] = & \mathbb{E}\left[p_h \left(1 + z^{(n)}_h(\bq^{(n)}) + O\left( \epsilon^2\right)\right) \bq^{(n)} \right] \nonumber \\
    = & p_h \mathbb{E}\left[\bq^{(n)}\right] + p_h \mathbb{E}\left[z_h^{(n)}(\bq^{(n)})\,\bq^{(n)}\right] + O\left(p_h\epsilon^2 \mathbb{E}\left\|\bq^{(n)}\right\|\right) \nonumber \\
    = & p_h \bar{\bw} + p_h \mathbb{E}\left[\frac{\langle \bq^{(n)}, \delta_h^{(n)} \rangle}{\sqrt d}\, \bq^{(n)}
    \right] + O\left(p_h \epsilon^2 \mathbb{E}\|\bq^{(n)}\|\right) \nonumber \\
    = & p_h \bar{\bw} + \frac{p_h}{\sqrt{d}} \mathbb{E}\left[\bq^{(n)}(\bq^{(n)})^\top\right] \delta_h^{(n)} + O\left(p_h \epsilon^2 \mathbb{E}\|\bq^{(n)}\|\right). \label{pf:pop_exp_2}
\end{align}
Combining \eqref{pf:pop_exp_1} and \eqref{pf:pop_exp_2}, we have
\begin{align}
    \mathbb{E}\left[\bmu_h^{(n+1)} - \bmu_h^{(n)}\right] = & \frac{\eta}{\sqrt{d}} \left(\mathbb{E}\left[\mathbf{1}\{X^{(n)} = \bh\} \bq^{(n)}\right] - \mathbb{E}\left[M^{(n)}_h(\bq^{(n)}) \bq^{(n)} \right] \right) \nonumber \\
    = & \frac{\eta p_h}{\sqrt{d}}(\bw_h - \bar{\bw}) - \frac{\eta p_h}{d} \mathbb{E}\left[\bq^{(n)}(\bq^{(n)})^\top\right] \delta_h^{(n)} + O\left(\frac{\eta p_h\epsilon^2}{\sqrt{d}}\E\|\bq^{(n)}\|\right). \label{pf:pop_exp_3}
\end{align}
Thus, in the early stage when $\delta_h^{(n)}$ remains small, the leading update direction is
\begin{align}
    \mathbb{E}\left[\bmu_h^{(n+1)} - \bmu_h^{(n)}\right] = \frac{\eta p_h}{\sqrt d}(\bw_h-\bar{\bw}) + \text{higher-order terms}. \label{pf:pop_update}
\end{align}
Hence, as long as the early-stage centered deviation $\delta_h^{(n)}$ and Taylor remainder remain sufficiently small, the token $\bh$'s learning speed is proportional to $p_h$. Therefore, among tokens with comparable query-context magnitude, more frequent tokens receive greater early-stage directional reinforcement.
$\hfill\square$
\end{proof}

\begin{remark}{\textbf{(Why early-stage amplification matters at the final stage training).}}
\label{rem:early_window}
Lemma \ref{lem:pop_mu} says that SGD pulls $\bmu_h$ in the direction of $\bw_h - \bar{\bw}$ at a rate multiplied by $p_h$. It hinges on an early training window in which the model's own attention weights are not yet too sharply concentrated. In particular, within the first $N$ updates, tokens with higher frequency $p_h$ accrue a greater representation advantage. 

This early-stage result is meaningful for the final trained iterate used in practice because training is typically stopped based on the improvement in the average loss $\ell_{X^{(n)}}$ over all tokens, rather than on verifying that each token-specific loss has converged. In our setting, the empirical cross-entropy is frequency-weighted, so residual errors on low-frequency tokens contribute only weakly to the observed improvement in the overall objective. Consequently, once frequent tokens are fitted well enough that the average one-step loss decrease becomes small, optimization may stop even though rare tokens remain under-trained. In this sense, an early popularity-aligned advantage need not dominate the entire trajectory to matter at the final model returned by training. It suffices that the advantage is created before the stopping criterion is triggered and is not fully corrected before aggregate improvement falls below the practical convergence threshold. 

Lemma \ref{lem:pop_mu} identifies a concrete bias-creation window. It does not claim that the same approximation remains valid for all later iterations. Rather, it shows that dot-product-softmax attention can generate a frequency-aligned asymmetry early in training, and a standard average-loss stopping rule may then preserve this asymmetry in the final iterate by terminating before low-frequency tokens are fully learned. This is consistent with prior empirical evidence that early frequency-driven advantages can create persistent popularity bias \citep{chen2023adap} and early deficits can leave a lasting effect on the final trained model \citep{kleinmancritical}.
\end{remark}

\subsection{Proof of Proposition \ref{prop:pop_bias}}

As noted in Remark \ref{rem:early_window}, under the SGD stopping rule and empirical evidence, the early-stage directional bias accumulates and is often not fully erased. In this case, the accumulated early-stage advantage persists in the sense that at the stopping step $\widehat{N}$,
\begin{align}
    \mathbb{E}\left[\bmu_h^{(\widehat{N})}\right] = \bmu_h^{(0)} + \frac{\eta \widehat{N} p_h}{\sqrt{d}}(\bw_h - \bar{\bw}) + \br^{(\widehat{N})}_h, \label{pf:pop_persist}
\end{align}
where 
\begin{align}
    \left\|\br^{(\widehat{N})}_h\right\| \leq \xi^{(\widehat{N})} \label{pf:pop_r}
\end{align}
for some small $\xi^{(\widehat{N})}$. 

After obtaining the trained $\{\bmu^{(n)}\}^{\widehat{N}}_{n=1}$, for any fixed query $\bq$, recall that by definition
\begin{align}
    M_h^{(\widehat{N})}(\bq) = \frac{p_h \exp\left(\langle \bq, \bmu_h^{(\widehat{N})}\rangle / \sqrt{d}\right)}{\sum_{h' \in \HHH} p_{h'} \exp\left(\langle \bq, \bmu_{h'}^{(\widehat{N})}\rangle / \sqrt{d}\right)}. \nonumber
\end{align}
Hence the denominator cancels in the ratio, and
\begin{align}
    \frac{M_h^{(\widehat{N})}(\bq)}{M_{h'}^{(\widehat{N})}(\bq)} = \frac{p_h}{p_{h'}} \cdot
    \exp\left(\frac{\langle \bq, \bmu_h^{(\widehat{N})} -\bmu_{h'}^{(\widehat{N})}\rangle}{\sqrt{d}}\right). \label{pf:pop_M_ratio}
\end{align}
Therefore, taking expectation over all training randomness, by Jensen's inequality, 
\begin{align}
    \text{AR}(\bh,\bh') = \mathbb{E}\left[\exp\left( \frac{\left\langle \bq, \bmu_h^{(\widehat{N})} -\bmu_{h'}^{(\widehat{N})}\right\rangle}{\sqrt d}\right) \right] \geq \exp\left(\frac{1}{\sqrt{d}} \left\langle \bq, \mathbb{E}\left[\bmu_h^{(\widehat{N})} -\bmu_{h'}^{(\widehat{N})}\right]\right\rangle \right). \label{pf:pop_AR_1}
\end{align}
Using \eqref{pf:pop_persist},
\begin{align}
    \text{AR}(\bh,\bh') \geq  \exp\left(\frac{\eta \widehat{N}}{d} \left\langle \bq, p_h(\bw_h - \bar{\bw}) - p_{h'} (\bw_{h'} - \bar{\bw})\right\rangle + \frac{1}{\sqrt{d}} \left\langle \bq, \br_h^{(\widehat{N})} -\br_{h'}^{(\widehat{N})} \right\rangle\right). \label{pf:pop_AR_2}
\end{align}
Since $\|\br^{(\widehat{N})}\| \leq \xi^{(\widehat{N})}$, by Cauchy-Schwarz, the last term in \ref{pf:pop_AR_2} satisfies
\begin{align}
    \left|\frac{1}{\sqrt{d}} \left\langle \bq, \br_h^{(\widehat{N})} -\br_{h'}^{(\widehat{N})} \right\rangle\right| \leq \frac{\|\bq\|}{\sqrt{d}} \left(\|\br_h^{(\widehat{N})}\| + \|\br_{h'}^{(\widehat{N})}\|\right) \leq \frac{2\xi^{(\widehat{N})}\|\bq\|}{\sqrt{d}}. \nonumber
\end{align}
Therefore, 
\begin{align}
    \text{AR}(\bh,\bh') \geq\exp\left(\frac{\eta \widehat{N}}{d} \left\langle \bq, p_h(\bw_h - \bar{\bw}) - p_{h'} (\bw_{h'} - \bar{\bw})\right\rangle - \frac{2\xi^{(\widehat{N})}\|\bq\|}{\sqrt{d}}\right). \label{pf:pop_M_ratio_3}
\end{align}
which proves the claim.
$\hfill\square$

\section{Latent Driver Bias: Auxiliary Results and Proofs}\label{sec:pf_cloning}

By the definition of softmax,
\begin{align}
    p_j = \frac{e^{Z_j}}{\sum_{j'=1}^T e^{Z_{j'}}}, \quad p_k = \frac{e^{Z_k}}{\sum_{j'=1}^T e^{Z_{j'}}}.
\end{align}
Since both denominators are identical and strictly positive,
\begin{align}
    \frac{p_j}{p_k} = \frac{e^{Z_j}}{e^{Z_k}} = e^{Z_j-Z_k}. \nonumber
\end{align}
Note that $Z_j - Z_k$ is a linear function of the Gaussian vector $Z$, hence is Gaussian. Since $Z \sim \mathcal{N}(0,\Sigma)$, $Z_j-Z_k = (\boldsymbol{e}_j - \boldsymbol{e}_k)^\top Z$ where $\boldsymbol{e}_j, \boldsymbol{e}_k$ are the standard basis vectors. Therefore,
\begin{align}
    & Z_j - Z_k \sim \mathcal{N}\left(0, (\boldsymbol{e}_j - \boldsymbol{e}_k)^\top \Sigma (\boldsymbol{e}_j - \boldsymbol{e}_k)\right) \nonumber \\
    \Leftrightarrow & Z_j - Z_k \sim \mathcal{N}\left(\Sigma_{jj}+\Sigma_{kk}-2\Sigma_{jk}\right). \nonumber
\end{align}
Under Assumption \ref{asp:clone_noise}, $\Sigma=\text{diag}(\sigma_1^2,\dots,\sigma_T^2)$, and therefore
\begin{align}
    \log\left(\frac{p_j}{p_k}\right) \sim \mathcal{N}(0, \sigma_j^2 + \sigma_k^2). \nonumber
\end{align}
Since $p_j/p_k$ is lognormal with log-mean $0$ and log-variance $\sigma_j^2+\sigma_k^2$,
\begin{align}
    \mathbb{E}\left[\frac{p_j}{p_k}\right] = \exp\left(\frac{\sigma_j^2+\sigma_k^2}{2}\right). \nonumber
\end{align}
This is strictly increasing and exponential in the pairwise variance. For any $c>1$,
\begin{align}
    \Pr\left(\frac{p_j}{p_k} \geq c\right) = \Pr\left(\log\left(\frac{p_j}{p_k}\right) \geq \log c\right) = 1- \Phi\left(\frac{\log c}{\sqrt{\sigma_j^2+\sigma_k^2}}\right). \nonumber
\end{align}
The probability is strictly increasing in $\sigma_j^2 + \sigma_k^2$.
$\hfill\square$

\section{Synthetic Data Bias: Auxiliary Results and Proofs}\label{sec:pf_del}
Before diving into the proof of Proposition \ref{prop:del_bias} in section \ref{pf:prop_del}, we introduce the detailed formulation of $\bp_j$ in section \ref{pf:del_onehot} and show that the cross entropy sets the context-conditional distribution to the empirical conditional frequencies in section \ref{pf:del_empirical}. 

\subsection{One-Hot Embedding}\label{pf:del_onehot}
For an event with history encoding $H_t$ and prefix $\by_{<l}$, letting $\widetilde{\by}_l$ be the decoder state after masked self-attention, we define the cross-attention output
\begin{align}
    \br_{t,l}^{(r)}:= \sum_{j=1}^{t \times L}  \text{softmax}\left(\frac{\langle \widetilde{\by}_l W_{Q}^{(m),\text{cross}},\, \bh^{\text{enc}}_j W_{K}^{(m),\text{cross}}\rangle}{\sqrt{d^{\text{cross}}}}\right)_j \cdot \left(\bh^{\text{enc}}_j W_{V}^{(m),\text{cross}}\right),
\end{align}
with $\br_{t,l}:=\br_{t,l}^{(M)}$ after stacking layers and FFNs. We write $\br_{t,l}=R_\theta(H_t, \by_{<l})$ as shorthand, where $\theta$ collects all decoder parameters, including $(W_Q^{\text{cross}}, W_K^{\text{cross}},W_V^{\text{cross}})$. Recall our loss objective
\begin{align}
    \LLL_{\text{SID}} = -\sum_{H_t, Y} \sum^L_{l=1} \log\left(p_{\theta}(c_l|H_t, c_{<l})\right). \nonumber
\end{align}
During training, each pair $(H_t, Y)$ contributes $L$ supervised prediction events. Index all such events by $n=1,\dots,N$ (so $N = \sum_{(H_t,Y)}L$), where event $n$ corresponds to some $(t,l)$ and has label $\by_n$ and context $\br_n :=R_\theta \left(H_t, \by_{<l}\right)$. Substituting this into $\LLL_{\text{SID}}$ shows that the original objective is precisely the cross-entropy of a multiclass classifier evaluated on the (learned) features $\{\br_n\}_{n=1}^N$:
\begin{align}
    \LLL_{\text{SID}} = -\sum_{n=1}^N \log\left(p_{\theta}(\by_n|\br_n)\right) = -\sum_{n=1}^N \log \left(\text{softmax} (W_{\text{out}} \br_n + \bb)_{\by_n}\right). \nonumber
\end{align}
We next consider a model simplification following \citet{seddik2024bad}. Since
\begin{itemize}
    \item the token alphabet $\HHH, \YYY$ is finite,
    \item the semantic ID length $L$ is finite,
    \item during model training, the interaction history is truncated to a finite window (usually the last $T$ tokens),
\end{itemize}
the context space is finite. Hence, we can define a clustering/quantization map $g: \mathbb{R}^d \rightarrow [c]$ that projects each context $\br_n$ to a finite set of $c$ one-hot embeddings:
\begin{align}
    \bx_n :=\boldsymbol{e}_{g(\br_n)} \in \{\boldsymbol{e}_1, \ldots, \boldsymbol{e}_c\} \subset \mathbb{R}^c. \nonumber
\end{align}
Each element of the one-hot embedding set $\boldsymbol{e}_i$ represents one possible state of $\br_n$. Similarly, we can project the possible next token $\by_n$ to a finite set of $s$ one-hot embeddings:
\begin{align}
    \bz_n \in \{\boldsymbol{e}_1, \ldots, \boldsymbol{e}_s\} \subset \mathbb{R}^s. \nonumber
\end{align}
In this case, for each round of training, it can be seen that we have $N$ samples of contexts and next-token pairs $\{(\bx_n, \bz_n)\}^N_{n=1}$ in each retraining round. 

During training, we focus on a surrogate model of 
\begin{align}
    \widehat{\LLL}_{\text{SID}} := - \frac{1}{N} \sum_{n=1}^N \log p(\bz_n \mid \bx_n), \nonumber
\end{align}
and find its minimizer
\begin{align}
    \argmin \widehat{\LLL}_{\text{SID}} = \argmin_{W = [\bw_1, \ldots, \bw_s] \in \mathbb{R}^{c \times s}} - \frac{1}{N} \sum_{n=1}^N \bz_n^{\top} \log \left(\text{softmax} (W \bx_n)\right). \label{eq:del_surrogate}
\end{align}
The expression \eqref{eq:del_surrogate} should be interpreted as a mechanism-isolating representation of the induced mapping from context states to output distributions: row $j$ of $W$ encodes the logits used when $\bx_n=\boldsymbol{e}_j$, i.e., $\bp(\cdot \mid \bx_n = \boldsymbol{e}_j)=\text{softmax}(W_{j,:})$. 

\begin{remark}{\textbf{(Why the surrogate problem is sufficient).}}
The matrix $W\in\mathbb R^{c\times s}$ should be interpreted as a reduced-form representation of the recommender’s conditional output behavior after one-hotting contexts. In the full encoder-decoder transformer, the cross-attention block (together with the remaining decoder layers) maps a history/prefix context to a continuous hidden state $\br \in \mathbb R^d$, and an output head then maps $\br$ to logits over the finite catalog items. Our analysis replaces this end-to-end mapping by a per-context categorical model $\bp(\cdot\mid \bx)=\mathrm{softmax}(W^\top \bx)$, where each one-hot context state $\bx=\boldsymbol{e}_j$ is allowed to have its own arbitrary logit vector $W_{j,:}$. This does not claim that the transformer literally implements an unconstrained table over contexts, but rather asks what happens in the best-case regime where the learner is expressive enough to represent an \emph{arbitrary} conditional distribution within each context state.
\end{remark}

\subsection{A Lemma on Empirical Conditional Frequencies}\label{pf:del_empirical}

For each context index $j \in [c]$, we define the index set and counts
\begin{align}
    C_j := \{n \in [N]: \bx_n = \boldsymbol{e}_j\}, \quad N_j:=|C_j|, \quad N_{jk}:= \sum_{n \in C_j} \mathbf 1\{\bz_n = \boldsymbol{e}_k\}. \nonumber
\end{align}
We assume $N_j>0$ for all $j \in [c]$, and let the empirical conditional distribution within context $j$ be
\begin{equation}\label{eq:del_empirical_p}
    \widehat{p}_{jk} := \frac{N_{jk}}{N_j} = \frac{1}{|C_j|} \sum_{n \in C_j} (\bz_n)_k, \quad \forall k\in[s].
\end{equation}
We show in this section that the cross entropy fitting \eqref{eq:del_surrogate} sets the conditional output distribution in each context state $\bp^\star(\cdot\mid \bx=\boldsymbol{e}_j)$ to the empirical conditional frequencies $\widehat{\bp}_j$.

\begin{lemma}[Empirical conditional frequencies]\label{lem:del_eq2}
The minimizer of \eqref{eq:del_surrogate} over conditional distributions satisfies
\begin{align}
    \bp^\star(\cdot\mid \bx=\boldsymbol{e}_j) = \widehat{\bp}_j \quad \forall j \in [c]. \nonumber
\end{align}
\end{lemma}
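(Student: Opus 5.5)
The plan is to exploit the fact that, with one-hot contexts, the objective in \eqref{eq:del_surrogate} separates across context states, so that each context reduces to a standard categorical maximum-likelihood problem. For every $n\in C_j$ we have $\bx_n=\boldsymbol{e}_j$, so the predicted distribution is $\bp(\cdot\mid \boldsymbol{e}_j)=\softmax(W_{j,:})=:\bp_j$, which depends only on row $j$ of $W$. Grouping the $N$ terms by context and then by label gives
\begin{align}
    \widehat{\LLL}_{\text{SID}} = -\frac{1}{N}\sum_{j=1}^c \sum_{k=1}^s N_{jk}\log p_{jk} = \sum_{j=1}^c \frac{N_j}{N}\Big(-\sum_{k=1}^s \widehat{p}_{jk}\log p_{jk}\Big). \nonumber
\end{align}
Since the rows $W_{j,:}$ are free and distinct rows do not interact, the minimization decouples into $c$ independent problems. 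Each problem is the cross-entropy $H(\widehat{\bp}_j,\bp_j)$ over the probability simplex in context $j$, weighted by the positive constant $N_j/N$, which is positive because $N_j>0$ is assumed.

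For a fixed $j$, I would use the Gibbs inequality decomposition
\begin{align}
    -\sum_k \widehat{p}_{jk}\log p_{jk} = H(\widehat{\bp}_j) + \mathrm{KL}(\widehat{\bp}_j\,\|\,\bp_j). \nonumber
\end{align}
The Kullback--Leibler term is nonnegative and vanishes if and only if $\bp_j=\widehat{\bp}_j$. This identifies $\widehat{\bp}_j$ as the unique minimizer over conditional distributions. An alternative route is a Lagrange-multiplier argument on the simplex: stationarity gives $N_{jk}/p_{jk}=\lambda$, hence $p_{jk}=N_{jk}/N_j$, and strict concavity of $\log$ yields uniqueness.

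The main obstacle is a technicality of the softmax parameterization. If some $\widehat{p}_{jk}=0$, then $\widehat{\bp}_j$ lies on the boundary of the simplex and is not exactly attained by any finite $W_{j,:}$. I would handle this by stating the result, as the lemma does, as a minimization over conditional distributions, that is, over the closure of the softmax image. There the infimum is attained at $\widehat{\bp}_j$, with the convention $0\log 0=0$. I would also remark that the infimum over $W$ is approached by sending the corresponding logits to $-\infty$. When all $N_{jk}>0$, the choice $W_{jk}=\log\widehat{p}_{jk}$, or any shift of it by a constant, attains the minimum directly.
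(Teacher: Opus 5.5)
Your proposal is correct and follows the paper's proof. The paper likewise splits the loss by context into $\sum_{j}\frac{N_j}{N}\bigl(-\sum_{k}\widehat{p}_{jk}\log p_{jk}\bigr)$, minimizes each term over the simplex, and handles zero empirical frequencies by sending the corresponding logits to $-\infty$. The one difference is the per-context step: the paper uses Lagrangian stationarity (your stated alternative), whereas your Gibbs/KL decomposition gives global optimality and uniqueness directly, including on the simplex boundary, without implicitly relying on convexity or on the nonnegativity constraints the paper's Lagrangian omits.
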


\begin{proof}{Proof of Lemma \ref{lem:del_eq2}}
Fix $n$ with $\bx_n=\boldsymbol{e}_j$. Then we have $W^\top \bx_n=W^\top \boldsymbol{e}_j = W^\top_{j,:} \in\mathbb R^s$, i.e., the $j$-th row of $W^\top$. Note that $\bp(\cdot \mid \bx_n)=\text{softmax}(W^\top_{j,:})$ is the same for all samples in $C_j$.
Writing $p_{jk}:=\text{softmax}(W^\top_{j,:})_k$, we have
\begin{align}
    \widehat{\LLL}_{\text{SID}} = -\frac{1}{N} \sum_{j=1}^c \sum_{n \in C_j} \sum_{k=1}^s (\bz_n)_k \log p_{jk} = \sum_{j=1}^c \frac{N_j}{N}\left(-\sum_{k=1}^s \widehat{p}_{jk}\log p_{jk}\right), \nonumber
\end{align}
where $\widehat{p}_{jk}$ is defined in \eqref{eq:del_empirical_p}. Hence minimizing $\widehat{\LLL}_{\text{SID}}$ reduces to minimizing, 
\begin{equation}\label{eq:del_min_L}
    \min_{\bp_j \in \Delta_s} \left(-\sum_{k=1}^s \widehat{p}_{jk} \log p_{jk}\right) \quad \forall j \in [c].
\end{equation}
We solve \eqref{eq:del_min_L} by the KKT condition. We write down the Lagrangian with $\lambda$ being the multiplier of $\sum^s_{k=1} p_{jk} =1$:
\begin{align}
    L(\bp, \lambda) = -\sum_{k=1}^s \widehat{p}_{jk} \log p_{jk} + \lambda\left(\sum^s_{k=1} p_{jk} - 1\right). \nonumber
\end{align}
Differentiate w.r.t. $\bp_j$:
\begin{align}
    \frac{\partial L(\bp, \lambda)}{\partial p_{jk}} = -\frac{\widehat{p}_{jk}}{p_{jk}} + \lambda = 0 \Rightarrow p_{jk} = \frac{\widehat{p}_{jk}}{\lambda}. \nonumber
\end{align}
Plugging the differentiation into $\sum^s_{k=1} p_{jk} =1$, we have
\begin{align}
    \sum^s_{k=1} \frac{\widehat{p}_{jk}}{\lambda} = 1 \Rightarrow \lambda  = \sum^s_{k=1} \widehat{p}_{jk} = 1. \nonumber
\end{align}
Therefore, the minimizer of $\widehat{\LLL}_{\text{SID}}$ is $$\bp^\star(\cdot\mid \bx=\boldsymbol{e}_j) = \widehat{\bp}_j.$$ If all entries $\widehat{p}_{jk}$ are positive, choose logits $(W^\top_{j,:})_k = \log \widehat{p}_{jk}$ (up to an additive constant), such that $\text{softmax}(W^\top_{j,:})=\widehat{\bp}_j$. If some $\widehat{p}_{jk}=0$, the minimizer $\widehat{\bp}_j$ lies on the boundary of the simplex. In this case, a sequence of logits with $W^\top_{j,k} \to -\infty$ for those $k$ achieves $\text{softmax}(W^\top_{j,:})\to \widehat{\bp}_j$. In both cases, our lemma holds.
$\hfill\square$
\end{proof}
Lemma \ref{lem:del_eq2} suggests that under cross-entropy training, the optimal predictor for a fixed context state $j \in [c]$ is the empirical histogram of next outcomes observed under that state $\widehat{\bp}_j$. This is intuitive and matches the maximum-likelihood objective. 

\subsection{Proof of Proposition \ref{prop:del_bias}}\label{pf:prop_del}

Let there be $R$ training rounds. Fixing a context $j$, we let $\bp^{(0)}_j = (p_{j1}, \ldots, p_{js})$ be the ground-truth distribution. We define 
\begin{align}
    \bp^{(r)}_j = (p^{(r)}_{j1}, \ldots, p^{(r)}_{js}) \qquad \forall 1 \leq r \leq R \nonumber
\end{align}
to be the $r$-th generation model trained on data from the last round. Lemma \ref{lem:del_eq2} shows that $\bp^{(r)}_j$ follows the empirical histogram generated by $\bp^{(r-1)}_j$. We train on a data set where, in round $r$, there are
\begin{itemize}
    \item \textit{Organic (real) data:} $N_j$ data points make decision based on their own judgment, characterized by i.i.d. one-hot tokens $\{\bz^{(0)}_{j,n}\}^{N_j}_{n=1}$ sampled following the ground-truth distribution $\bp^{(0)}_j$, i.e., $\Pr(\bz^{(0)}_{j,n} = \boldsymbol{e}_k)= p^{(0)}_{jk}$;
    \item \textit{Delegated (synthetic) data:} for $r \geq 2$, $\widehat{N}_j$ data points let agentic AI delegate their decisions, characterized by i.i.d. one-hot tokens $\{\bz^{(r-1)}_{j,n}\}^{\widehat{N}_j}_{n = 1}$ sampled following the $(r-1)$-th generation empirical distribution $\bp^{(r-1)}_j$, i.e., $\Pr(\bz^{(r-1)}_{j,n} = \boldsymbol{e}_k) = p^{(r-1)}_{jk}$.
\end{itemize}

Our proof is inspired by \citet{seddik2024bad}, where they show cross-entropy-induced model collapse. Since we train on the mixture, we have
\begin{align}
    \bp^{(r)}_j= \frac{1}{N_j + \widehat{N}_j}\left(\sum_{n=1}^{N_j} \bz^{(0)}_{j,n} + \sum_{n=1}^{\widehat{N}_j} \bz^{(r-1)}_{j,n}\right). \label{eq:del_pm}
\end{align}
Recall that
\begin{align}
    S^{(r)}_j:=\mathbb E\left[\|\bp^{(r)}_j\|_2^2\right], \quad S^{(0)}_j:=\|\bp^{(0)}_j\|_2^2. \nonumber
\end{align}
The larger $S^{(r)}_j$ is, the less diverse the generation output is. In the extreme case where $S^{(r)}_j \rightarrow 1$ as $r \rightarrow +\infty$, the output tends to be deterministic, and the model collapses. In the following proposition, we show how the empirical histogram (characterized by $S^{(r)}_j$) evolves.

Fix a $j \in [c]$. In the following, we neglect all subscripts $j$ in $N_j, \widehat{N}_j, \bp^{(r)}_j, \bz^{(r)}_{j,n}, S^{(r)}_j$ since the proof is the same for all context state $j$. We write $N':= N + \widehat{N}$, and define the second-moment quantity
\begin{align}
    \nu_k^{(r)} := \mathbb{E}\left[\left(p_k^{(r)}\right)^2\right] \qquad \forall k \in [s]. \nonumber
\end{align}
We first compute $\nu_k^{(1)}$. Since $p_k^{(1)} = (\sum^{N}_{n=1} (\bz^{(0)}_n)_k)/N$ and $(\bz^{(0)}_n)_k$ are i.i.d. Bernoulli($p_k^{(0)}$), we have
\begin{align}
    \nu_k^{(1)} =\mathbb{E}[(p_k^{(1)})^2] = \frac{1}{N^2} \cdot \mathbb{E}\left[\left(\sum_{n=1}^N (\bz^{(0)}_n)_k\right)^2\right] = \frac{p_k^{(0)}}{N}+\left(1-\frac{1}{N}\right)\left(p_k^{(0)}\right)^2. \nonumber
\end{align}
For $r \geq 2$, given \eqref{eq:del_pm}, we have
\begin{align}
    \nu_k^{(r)} = & \mathbb{E}\left[\left(\frac{1}{N'}\left(\sum_{n=1}^N (\bz^{(0)}_n)_k+\sum_{n=1}^{\widehat{N}} (\bz^{(r-1)}_n)_k\right)\right)^2\right] \nonumber \\
    = & \frac{1}{(N')^2} \mathbb{E}\left[\left(\sum_{n=1}^N (\bz^{(0)}_n)_k\right)^2 + \left(\sum_{n=1}^{\widehat{N}} (\bz^{(r-1)}_n)_k\right)^2 + 2\left(\sum_{n=1}^N (\bz^{(0)}_n)_k\right)\left(\sum_{n'=1}^{\widehat{N}} (\bz^{(r-1)}_{n'})_k\right)\right]. \label{eq:pf_del_nu}
\end{align}

\textbf{Step 1: the user-own-decision term.}
Using independence across $n$,
\begin{equation}\label{pf:del_real}
    \mathbb{E}\left[\left(\sum_{n=1}^N (\bz^{(0)}_n)_k\right)^2\right] = Np_k^{(0)}+(N^2-N)\left(p_k^{(0)}\right)^2.
\end{equation}

\textbf{Step 2: the agent-delegation term.}
Condition on $\bp^{(r-1)}$, we know $(\bz^{(r-1)}_1)_k, \ldots, (\bz^{(r-1)}_{\widehat{N}})_k$ are conditionally i.i.d. Bernoulli$(p_k^{(r-1)})$, so
\begin{align}
    \mathbb{E}\left[\left(\sum_{n=1}^{\widehat{N}} (\bz^{(r-1)}_n)_k \right)^2 \Bigg| \bp^{(r-1)}\right] = \widehat{N} p_k^{(r-1)}+(\widehat{N}^2-\widehat{N})(p_k^{(r-1)})^2. \nonumber
\end{align}
Since
\begin{align}
    \mathbb{E}\left[p_k^{(r-1)}\right] = & (1-\alpha) p_k^{(0)} + \alpha \mathbb{E}\left[p^{(r-2)}_k\right] \nonumber \\
    = &\left((1-\alpha) \cdot \frac{1-\alpha^{r-1}}{1-\alpha} + \alpha^{r-1}\right) p_k^{(0)} = p_k^{(0)}, \nonumber
\end{align}
we have
\begin{align}
    \mathbb{E}\left[\left(\sum_{n=1}^{\widehat{N}} (\bz^{(r-1)}_n)_k \right)^2\right] = & \mathbb{E}\left[\mathbb{E}\left[\left(\sum_{n=1}^{\widehat{N}} (\bz^{(r-1)}_n)_k \right)^2 \Bigg| \bp^{(r-1)}\right]\right] \nonumber \\
    = & \widehat{N} \mathbb{E}\left[p_k^{(r-1)}\right]+(\widehat{N}^2-\widehat{N}) \mathbb{E}\left[(p_k^{(r-1)})^2\right] \nonumber\\
    = & \widehat{N} p_k^{(0)} +(\widehat{N}^2-\widehat{N})\nu_k^{(r-1)}. \label{pf:del_sync}
\end{align}

\textbf{Step 3: the cross term.}
By symmetry and independence,
\begin{align}
    \mathbb{E}\left[\left(\sum_{n=1}^N (\bz^{(0)}_n)_k\right) \left(\sum_{n'=1}^{\widehat{N}} (\bz^{(r-1)}_{n'})_k\right)\right] = N \cdot \widehat{N} \cdot \mathbb{E}\left[(\bz^{(0)}_1)_k (\bz^{(r-1)}_1)_k\right]. \nonumber
\end{align}
Using the tower rule and $\mathbb{E}[(\bz^{(r-1)}_1)_k\mid \bp^{(r-1)}]=p_k^{(r-1)}$, we have
\begin{equation}\label{pf:del_cross_1}
    \mathbb{E}\left[(\bz^{(0)}_1)_k (\bz^{(r-1)}_1)_k\right] = \mathbb{E}\left[\mathbb{E}\left[(\bz^{(0)}_1)_k (\bz^{(r-1)}_1)_k \Big| \bz^{(0)}_1, \bp^{(r-1)}\right]\right] = \mathbb{E}\left[(\bz^{(0)}_1)_k p_k^{(r-1)}\right].
\end{equation}
For $r \geq 2$, by the definition of $p_k^{(r-1)}$, we have
\begin{align}
    \mathbb{E}\left[(\bz^{(0)}_1)_k p_k^{(r-1)}\right] = & \frac{1}{N'} \mathbb{E}\left[(\bz^{(0)}_1)_k \sum_{n=1}^{N} (\bz^{(0)}_{n})_k\right] + \frac{1}{N'} \mathbb{E}\left[(\bz^{(0)}_1)_k \sum_{n=1}^{\widehat{N}} (\bz^{(r-2)}_{n})_k\right] \nonumber \\
    = & \frac{1}{N'} \left(p_k^{(0)} + (N-1)\left(p_k^{(0)}\right)^2\right) + \frac{1}{N'} \mathbb{E} \left[(\bz^{(0)}_1)_k \sum_{n=1}^{\widehat{N}} (\bz^{(r-2)}_{n})_k\right]. \label{pf:del_cross_2}
\end{align}
Also we have
\begin{align}
    \mathbb{E} \left[(\bz^{(0)}_1)_k \sum_{n=1}^{\widehat{N}} (\bz^{(1)}_{n})_k\right] = & \widehat{N} \cdot \mathbb{E} \left[(\bz^{(0)}_1)_k p^{(1)}_k\right] \nonumber \\
    = &\frac{\widehat{N}}{N} \cdot \mathbb{E} \left[(\bz^{(0)}_1)_k  \sum^N_{n=1} (\bz^{(0)}_n)_k\right] \nonumber \\
    =& \frac{\widehat{N}}{N} \left(p_k^{(0)} + (N-1) \left(p_k^{(0)}\right)^2\right), \label{pf:del_cross_3}
\end{align}
and
\begin{align}
    \mathbb{E} \left[(\bz^{(0)}_1)_k \sum_{n=1}^{\widehat{N}} (\bz^{(r-2)}_{n})_k\right] = & \widehat{N} \cdot \mathbb{E} \left[(\bz^{(0)}_1)_k p^{(r-2)}_k\right] \nonumber \\
    = &\frac{\widehat{N}}{N + \widehat{N}} \cdot \mathbb{E} \left[(\bz^{(0)}_1)_k  \left(\sum^N_{n=1} (\bz^{(0)}_n)_k  + \sum^{\widehat{N}}_{n=1} (\bz^{(r-3)}_n)_k\right)\right] \nonumber \\
    =& \alpha \left[\left(p_k^{(0)} + (N-1) \left(p_k^{(0)}\right)^2\right) + \mathbb{E} \left[(\bz^{(0)}_1)_k \sum_{n=1}^{\widehat{N}} (\bz^{(r-3)}_{n})_k\right]\right], \label{pf:del_cross_4}
\end{align}
By induction, we have 
\begin{align}
    \mathbb{E}\left[(\bz^{(0)}_1)_k p_k^{(r-1)}\right] = &\cdot \frac{1- \alpha^{r-2}}{1 - \alpha} \frac{p_k^{(0)} + (N-1)\left(p_k^{(0)}\right)^2}{N'} + \alpha^{(r-2)} \cdot \frac{p_k^{(0)} + (N-1)\left(p_k^{(0)}\right)^2}{N} \nonumber \\
    = & \frac{p_k^{(0)} + (N-1)\left(p_k^{(0)}\right)^2}{N}. \label{pf:del_cross}
\end{align}

\textbf{Step 4: Solve $\nu^{(r)}_k$ and $S^{(r)}$.}
Plugging \eqref{pf:del_real}, \eqref{pf:del_sync} and \eqref{pf:del_cross} into \eqref{eq:pf_del_nu}, we have
\begin{align}
    \nu^{(r)}_k = \frac{(1-\alpha)(1+2\alpha)}{N} p_k^{(0)} + \left(1-\frac{1}{N}\right)(1-\alpha^2) \left(p_k^{(0)}\right)^2 + \alpha \left(\left(1+\frac{1}{N}\right) \alpha - \frac{1}{N}\right) \nu_k^{(r-1)}. \nonumber
\end{align}
Solving the $\nu^{(r)}_k$ iteration and plugging in $\nu^{(1)}_k = p_k^{(0)}/N + (1-1/N) (p_k^{(0)})^2$, we have for all $r \geq 1$:
\begin{equation}\label{pf:nu_closed}
    \nu_k^{(r+1)} = \frac{p_k^{(0)}}{N} \cdot \frac{1+2\alpha - (1 -\frac{1}{N}) \alpha \beta^{r}}{1 +(1+\frac{1}{N}) \alpha} + \left(1-\frac{1}{N}\right)\left(p_k^{(0)}\right)^2 \cdot \frac{ 1 + \alpha + \frac{\alpha}{N} \beta^{r}}{1+ (1+\frac{1}{N}) \alpha},
\end{equation}
where $\beta = \alpha((1+1/N)\alpha - 1/N)$. Solving \eqref{pf:nu_closed} over $k \in [s]$, with the fact that $\sum^s_{k=1} p_k^{(0)} = 1$ and $\sum^s_{k=1} (p_k^{(0)})^2 = S^{(0)}$, we have 
\begin{align}
    S^{(r+1)} = & \frac{1}{N} \cdot \frac{ 1 + 2\alpha - (1 - \frac{1}{N}) \alpha \beta^{r}}{1+ (1 + \frac{1}{N}) \alpha} + \left(1 - \frac{1}{N}\right) S^{(0)} \cdot \frac{ 1 + \alpha + \frac{1}{N} \alpha \beta^{r}}{1 + (1+\frac{1}{N}) \alpha} \nonumber \\
    = & \frac{ \frac{1}{N} (1 + 2\alpha) + (1 - \frac{1}{N})(1+\alpha) S^{(0)}}{1+ (1 + \frac{1}{N}) \alpha} - \frac{1}{N} \left(1 - \frac{1}{N}\right)\frac{\left(1 - S^{(0)}\right)\alpha \beta^{r}}{1 + (1+\frac{1}{N}) \alpha}.
\end{align}
Note that we assume $\widehat{N} > 1$, so that $\alpha > 1/(N+1)$ and $\beta \in (0,1)$. If $N>1$ and $S^{(0)}<1$, then $S^{(r)}$ is strictly increasing in $r$.
$\hfill\square$

\end{document}